\renewcommand{\baselinestretch}{1.5}
\newcommand{\cM}{{\cal M}}
\newcommand{\cD}{\mbox{$\mathcal{D}$}}
\def\spacingset#1{\renewcommand{\baselinestretch}%
{#1}\small\normalsize} \spacingset{1}
\newtheorem{theorem}{Theorem}
\newtheorem{lemma}{Lemma}
\newtheorem{corollary}{Corollary}
\newtheorem{remark}{Remark}
\renewcommand{\hat}{\widehat}
\newcommand{\argmax}{{\rm argmax}}
\newcommand{\tr}{\mbox{tr}}
\newcommand{\bA}{{\mathbf A}}
\newcommand{\bB}{{\mathbf B}}
\newcommand{\bG}{{\mathbf G}}
\newcommand{\bH}{{\mathbf H}}
\newcommand{\bI}{{\mathbf I}}
\newcommand{\bM}{{\mathbf M}}
\newcommand{\bQ}{{\mathbf Q}}
\newcommand{\bP}{{\mathbf P}}
\newcommand{\bS}{{\mathbf S}}
\newcommand{\bU}{{\mathbf U}}
\newcommand{\bV}{{\mathbf V}}
\newcommand{\bZ}{{\mathbf Z}}
\newcommand{\ba}{{\mathbf a}}
\newcommand{\bff}{{\mathbf f}}
\newcommand{\bq}{{\mathbf q}}
\newcommand{\bs}{{\mathbf s}}
\newcommand{\bu}{{\mathbf u}}
\newcommand{\bv}{{\mathbf v}}
\newcommand{\bx}{{\mathbf x}}
\newcommand{\bz}{{\mathbf z}}
\newcommand{\bdelta} {\boldsymbol{\delta}}
\newcommand{\bSigma}{\boldsymbol{\Sigma}}
\newcommand{\bve}{\mbox{\boldmath$\varepsilon$}}
\newcommand{\bzero}{{\mathbf 0}}
\newcommand{\calD}{{\mathcal D}}
\newcommand{\calF}{{\mathcal F}}
\newcommand{\calM}{{\mathcal M}}
\def\6bullets{\bullet\bullet\bullet\bullet\bullet\bullet}
\title{\bf Sparse-Group Factor Analysis for High-Dimensional Time Series}
\date{}
\begin{document}

 \author{Xin Wang\\
    Department of Mathematics and Statistics, San Diego State University\\
    and \\
    Xialu Liu \\
    Department of Management Information Systems, San Diego State University}

\maketitle

\begin{abstract}
Factor analysis is a widely used technique for dimension reduction in high-dimensional data. However, a key challenge in factor models lies in the interpretability of the latent factors. One intuitive way to interpret these factors is through their associated loadings. \cite{liu::wang2025} proposed a novel framework that redefines factor models with sparse loadings to enhance interpretability. In many high-dimensional time series applications, variables exhibit natural group structures. Building on this idea, our paper incorporates domain knowledge and prior information by modeling both individual sparsity and group sparsity in the loading matrix. This dual-sparsity framework further improves the interpretability of the estimated factors. We develop an algorithm to estimate both the loading matrix and the common component, and we establish the asymptotic properties of the resulting estimators. Simulation studies demonstrate the strong performance of the proposed method, and a real-data application illustrates how incorporating prior knowledge leads to more interpretable results.
\end{abstract}

\spacingset{1.9} 

{\bf Keywords:} Factor models; Group MCP; Penalty functions; Regularization; Sparsity.

\section{Introduction}
High-dimensional time series data are widely observed in various fields, such as economics \citep{stock1998,stock2002a,stock2002b,ma2018,chen2021dynamic}, finance \citep{lam2011,chang2015,massacci2017,wang2019, chen2019constrained}, environmental \citep{pan2008,lam2012} and medical sciences \citep{lindquist2008statistical,smith2014group}. When analyzing high-dimensional data, various challenges arise, referred to as the curse of dimensionality. For example, to achieve the same level of estimation accuracy as in lower dimensions, the required sample size must also increase exponentially for high-dimensional data analysis. A widely used approach to break the curse of dimensionality is factor analysis \citep{forni2000,bai2002,bai2003,forni2005,lam2011,lam2012,fan2022learning,chen2023statistical}, which assumes that high-dimensional data can be represented by a much lower-dimensional process, called factors. One issue that hinders the wide application of factor models is that factors are hard to interpret because the factors and the loading matrix are unobserved. To address this, \cite{liu::wang2025} developed an algorithm to obtain a sparse estimate of the loading matrix with the orthogonality constraint, facilitating the interpretation of factors. In this work, we incorporate prior knowledge to the sparsity assumption and propose a sparse-group estimator that aligns with certain domain theories, further enhancing the model interpretation.  

When analyzing high-dimensional data from applied disciplines, relevant prior information about common factors is often available. For example, macro-economists and financial economists model the yield rates at different maturities using three latent factors --- level, slope and curvatures \citep{diebold2006, diebold2006macroeconomy}. In business and finance, variables are often naturally grouped. For instance, financial researchers break stocks into different groups by size and book-to-market to mimic the underlying factors in returns for asset pricing \citep{fama1993common,feng2020taming}. \cite{wang2019} analyzed the financial data of 200 companies and the companies were grouped by industry. \cite{chen2022trade} analyzed the trading volumes between 24 countries. These countries are clustered based on their geographical locations (Europe, Asia, etc.) and economic status (developing or developed countries). In many cases, factors make different impacts on different groups. \cite{liu::wang2025} studied the tourism data in Hawaii and found that two groups drive the number of domestic tourists: people to escape the cold (factor 1) and to enjoy the beach and water activities (factor 2). They also discovered that visitors from high-latitude states load heavily on factor 1 and visitors from inland or low-latitude states load heavily on factor 2. In this work, we propose a sparse group factor model that accommodates both group level and within group level sparsity. This structure enables the incorporation of prior information and produces more interpretable factors. 

Several studies in the literature made efforts on incorporation of prior knowledge into factor models \citep{tsai2010constrained,chen2019constrained}. When utilizing group information, these methods impose pre-specified group-wise constraints on the factor models and only between-group sparsity is considered. In contras, the model we propose allows both individual zero loadings (within-group sparsity) and group zero loadings (between-group sparsity). Within-group or between-group sparsity is determined by a data-driven approach. An existing method that enforces group-level sparsity on loadings is sparse group principal component analysis \citep{guo2010principal, jenatton2010structured,lee2025sparse}. In these approaches, penalty functions \citep{huang2012selective} are employed to identify the sparse loading structures. As is known, the loading matrix is not uniquely defined, and can rotate in the loading space. Compared with these variations of principal component analysis that assume the columns of loading matrix are orthogonal, the proposed method relaxes this assumption and explores the entire loading space to identify the most sparse loading matrix, thus producing a more sparse estimate for the loading matrix and making the factors more interpretable.

The paper makes the following contributions: (1) The factor model we propose can accommodate both individual level sparsity and group level sparsity in the loading matrix. It utilizes the prior information and enhances the model interpretation. (2) Compared with sparse-group PCA, we follow the approach of \cite{liu::wang2025} to define the loading matrix with the most sparse structure while allowing the columns of the loading matrix to be non-orthogonal. This results in a more parsimonious model and provides a clearer interpretation of the latent factors. 

The rest of the paper is organized as follows. Section \ref{sec::model} introduces the sparse group factor model with group structure. Section \ref{sec:Est} presents the estimation methods for the proposed sparse group factor model. Section \ref{sec:them} investigates the theoretical properties of our proposed estimators. The simulation and real data analysis are presented in Section \ref{sec:sim} and Section \ref{sec:example}. Finally, Section \ref{sec:conclusion} concludes the paper. 

\section{Model}\label{sec::model}

We introduce some notations first. For a vector $\bz$, we use $z_i$ to denote its $i$-th element. For a $p_1\times p_2$ matrix $\bZ$, its ($i,j$)-th element is denoted by $z_{ij}$ and its $i$-th column is denoted by $\bz_i$. Furthermore, we use ${\cal M}(\bZ)$ to denote the space spanned by the columns of $\bZ$. Let $\Vert \bZ\Vert_F$ be the Frobenius norm of $\bZ$, where $\Vert \bZ\Vert_F = \sqrt{\sum_{i=1}^{p_1}\sum_{j=1}^{p_2}z_{ij}^2}$, $\Vert \bZ\Vert_2$ be the $L$-2 norm of $\bZ$, where $\Vert \bZ\Vert_2 = \left[\lambda_{\max}(\bZ^\top \bZ)\right]^{1/2} $ and $\lambda_{\max}(\cdot)$ is the maximum eigenvalue of a square matrix, and $\|\bZ\|_{\min}$ is the nonzero minimum singular value of $\bZ$. We also define $L_1$, $L_{\infty}$ and max norm as follows: $\Vert{\bf Z}\Vert_{1}  =\max_{1\leq j\leq p_2}\sum_{i=1}^{p_1}\vert z_{ij}\vert$, $\Vert{\bf Z}\Vert_{\infty}  =\max_{1\leq i\leq p_1}\sum_{j=1}^{p_2}\vert z_{ij}\vert$ and $\Vert{\bf Z}\Vert_{\max} =\max_{ij}\vert z_{ij}\vert$. 
For a vector $\bz$, we let $\Vert{\bf z}\Vert_{\infty}=\max_{i}\vert z_{i}\vert$ and $\Vert{\bf Z}\Vert_{2\rightarrow\infty}=\sup_{\Vert\bx\Vert_{2}=1}\Vert{\bf Z}{\bf x}\Vert_{\infty}$, which is called two-to-infinity norm studied in \cite{cape2019two}. For any $\{a_n\}$ and $\{b_n\}$, ``$a_n\asymp b_n$" means $\lim_{n \to \infty} a_n/b_n = c$, where $c$ is a positive constant, and ``$a_n \gtrsim b_n$" means $a_n^{-1}b_n = o(1)$.

\label{sec:model}
Let $\bx_t$ be an observed $p \times 1$ time series
$t=1,\ldots, n$. The general form of a factor model for a $p$-dimensional time series is
\begin{equation}
\bx_t=\bA \bff_t +\bve_t, \label{eq::model}    
\end{equation}
where $\bx_t$ is the $p$-dimensional time series, $\bff_t=(f_{t1}, f_{t2}, \ldots, f_{tr})^\top$ is a set of unobserved
(latent) factor time series with dimension $r$ that is much smaller than $p$,
the matrix $\bA$ is the loading matrix of the common factors, and $\bve_t$ is a noise process.

One of the important characteristics of factor models is that both factors $\bff_t$ and loading matrix $\bA$ are unobserved. Therefore, the interpretation of latent factors can be done via $\bA$ and $\bx_t$, but is really challenging. Another feature of factor models is that factors and the loading matrix are not uniquely defined. Specifically, $(\bA, \bff_t)$ in (\ref{eq::model}) can be replaced by $(\bA\bV, \bV^{-1}\bff_t)$, where $\bV$ is an invertible $r\times r$ matrix. Fortunately, the column space spanned by $\bA$, denoted by ${\cal M}(\bA)$ and called the loading space, is unique. \cite{liu::wang2025} improves the model interpretability by defining the most sparse loading matrix in the loading matrix. Specifically, they rewrite the factor models in (\ref{eq::model}) as
\begin{equation}
 \bx_t=\bA^s \bff_t^s +\bve_t, \label{eq::model112}   
\end{equation}
where $\bA^s$ is one of the matrices with most zero elements in the loading space ${\cal M}(\bA)$, and satisfies: (1) $\|\ba_i^s\|_2=\|\ba_i\|_2$;  (2) Let $m_i$ be the number of nonzero elements in $\ba_i^s$ and $ m_1 < m_2 < \ldots < m_r$. The factor models with sparse loadings (\ref{eq::model112}) can be re-expressed with the standardized loading matrix as follows,
\begin{equation}
   \bx_t=\bQ \bz_t +\bve_t, \label{eq::model2} 
\end{equation}
where $\bq_1=\frac{\ba_1^s}{\|\ba_1^s\|_2}$ and $\bq_i=\frac{\ba_i^s}{\|\bP_i \ba_i^s\|_2}$, for $i=2,\ldots, r$, where $\bP_i =\bI- \bQ_{(i)} (\bQ_{(i)}^\top \bQ_{(i)})^{-1} \bQ_{(i)}^\top$ and $\bQ_{(i)}=(\bq_1, \ldots, \bq_{i-1})$. The norm of $\bq_i$ is bounded by imposing a constraint that the remainder has a norm of 1 if $\bq_i$ is projected on the space spanned by $\{\bq_1, \ldots, \bq_{i-1}\}$, for $i=2,\ldots, r$. Compared with traditional factor models (\ref{eq::model}), model (\ref{eq::model2}) has more sparse loadings and may provide a clearer interpretation of factors; see example 1 in \cite{liu::wang2025}. 

High-dimensional time series are often naturally grouped. For example, \cite{liu::wang2025}
analyzed the numbers of domestic visitors from 49 states to Hawaii. The 49 states can be grouped by regions (South, Midwest, etc.). In many cases, there are more than one way to group the variables. For instance, \cite{chen2022trade} analyzed the trading volumes between 24 countries. These countries were clustered based on both their geographical locations (Europe, Asia, etc.) and economic status (developing/developed countries). 


To incorporate such prior information, we introduce the following notation. Let $J_i$ be the number of groups in ${\bf q}_i$, $d_{ij^{\prime}}$ be the group size of the $j^{\prime}$-th group in the $i$-th factor, and ${\cal{G}}_{i(j^\prime)}$ be the index set of elements in the $j^\prime$-th group, for $j^{\prime} = 1,\dots, J_i$, and ${\cal{G}}_i= \{{\cal{G}}_{i(1)}, {\cal{G}}_{i(2)}, \ldots, {\cal{G}}_{i(J_i)} \}$ for $i = 1,\dots, r$. We denote $\bq_{i(j^\prime)}$ as a $d_{ij^{\prime}}\times 1$ vector representing the loadings of the $j^\prime$-th group for factor $i$. It is worth mentioning that the group structure of different factors can be the same or different.

\begin{remark}
The algorithm we propose in Section \ref{sec:Est} still works if $m_1 \leq m_2  \leq \ldots \leq m_r$. When the sparsity level of certain columns in $\bQ$ is the same, our algorithm can recover one of the most sparse loading matrices and estimate the loading space effectively; see more details in \cite{liu::wang2025}. For technical convenience, we assume that $m_1< m_2 < \ldots <m_r$ throughout the paper. 
\end{remark}

\begin{figure}[H]
    \centering
    \includegraphics[scale=0.8]{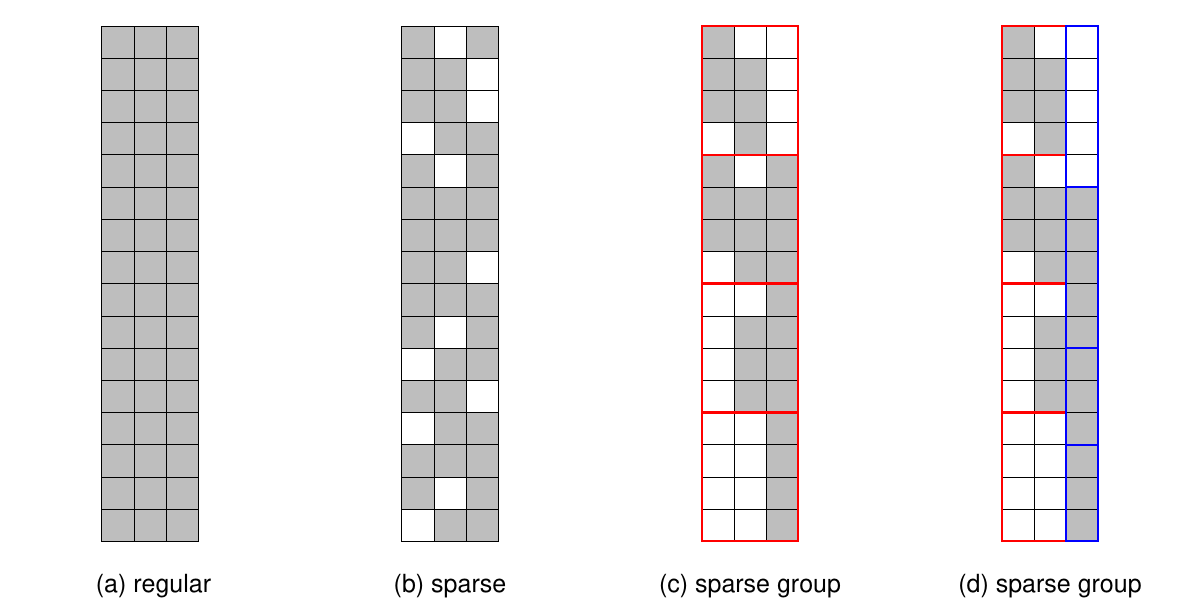}
    \caption{Examples of loading matrices}
    \label{fig:example}
\end{figure}

Figure \ref{fig:example} illustrates the loading matrices obtained by (a) the standard estimation method in \cite{lam2011}, (b) the sparse factor analysis method introduced in \cite{liu::wang2025}, and (c,d) the sparse-group factor analysis proposed in this paper. In these figures, gray cells represent nonzero loadings, and white cells represent zero loadings. Elements in different groups are separated by red or blue solid lines. In (c), there is only one way to group the time series with ${\cal G}_1 = {\cal G}_2 = {\cal G}_3$, where different groups are separated by red lines with total four groups. The group order is from the top to the bottom. In (c), we have $\bq_{1(3)}=\bq_{1(4)}=\bq_{2(4)}=\bq_{3(1)}=\mathbf{0}$. In (d) there are two ways to group the time series with  ${\cal G}_1={\cal G}_2 \neq {\cal G}_3$, where the first one is shown in red and the second one is shown in blue. Loadings for factor 1 and factor 2 have the same group structure, and loadings for factor 3 have a different group structure. In this example, group loadings $\bq_{1(3)}$, $\bq_{1(4)}$,  $\bq_{2(4)}$ and $\bq_{3(1)}$ are all zero. The group classification is assumed to be given, and for each factor which group structure will be used (red/blue) is also assumed to be known.


\begin{remark}
Variables from the same group are not necessarily adjacent as in the example shown in Figure \ref{fig:example}(d). However, for technical convenience, we assume that the loadings from the same group are arranged consecutively. In other words, $\bq_i$ can be written as $\bq_i = (\bq_{i(1)}^\top,\bq_{i(2)}^\top,\dots, \bq_{i(J_i)}^\top)^\top$, for $i=1,\ldots, r$.
\end{remark}

\section{Estimation}
\label{sec:Est}
In this section, we first briefly review the standard estimation method for the loading matrix proposed by \cite{lam2011} in Section \ref{subsec:reviewlam}, and then introduce our algorithm for obtaining a sparse group estimate in Section \ref{subsec:est_algorithm}.

\subsection{The standard estimation method}
\label{subsec:reviewlam}
Define
\[
\bSigma_x(h)=\frac{1}{n-h} \sum_{t=1}^{n-h} {\rm E} (\bx_t \bx_{t+h}^\top),\quad
\bM= \sum_{h=1}^{h_0} \bSigma_x(h) \bSigma_x(h)^\top,
\]
where $h_0$ is a pre-specified positive integer. Since $\{\bve_t\}$ has no serial dependence, we have
\begin{align}
\bM= \bA^s \left(\sum_{h=1}^{h_0} \bSigma_f^s(h) \bA^{s\top} \bA^s \bSigma_f^s(h)^\top \right) \bA^{s\top}, \label{eq::M}
\end{align}
where $\bSigma_f^s(h)=\sum_{t=1}^{n-h} {\rm E}(\bff_t^s \bff_{t+h}^{s\top})/(n-h)$. If the matrix in parentheses in (\ref{eq::M}) is full rank, the space spanned by the eigenvectors of $\bM$ corresponding to non-zero values is ${\calM}(\bA^s)$.

\cite{lam2011} defined the sample version of these matrices as follows
\[
\hat{\bSigma}_x(h)= \frac{1}{n-h} \sum_{t=1}^{n-h}\bx_t \bx_{t+h}^\top, \quad \hat{\bM}= \sum_{h=1}^{h_0} \hat{\bSigma}_x(h) \hat{\bSigma}_x(h)^\top.
\]

Thus, the loading space ${\cal M}(\bA^s)$ is estimated by ${{\cal M}(\hat{\bS})}$, where $\hat{\bS}=\{\hat{\bs}_1,\ldots, \hat{\bs}_r\}$ and $\hat{\bs}_i$ is the eigenvector of $\hat{\bM}$ corresponding to the $i$-th largest eigenvalue. In other words,
\begin{equation}
 \hat{\bS}= \argmax_{\bS^\top\bS=\bI_r} \text{tr}(\bS \hat{\bM} \bS^\top). \label{eq::obj}   
\end{equation}

\begin{remark} 
In practice, the number of factors, $r$, is unknown and must be estimated. A substantial body of research has addressed this problem; see, for example, \cite{bai2002,onatski2009,kapetanios2010,lam2012,han2022}. However, the primary focus of this paper is the estimation of factor loadings rather than the determination of $r$. Therefore, we assume that the number of factors is known throughout the paper.
\end{remark}


\subsection{Estimation with Sparse Group structure}
\label{subsec:est_algorithm}

In this section, we first introduce the optimization objective function for loading matrix estimation and then present the proposed algorithm. 

\subsubsection{The optimization problem}

To obtain the estimate we desire, the column space of this estimate should be close to ${\cal M}(\hat{\bS})$, and the number of zero elements needs to be as large as possible. Therefore, the objective function we would like to minimize is the distance between our estimate and $\cM(\hat{\bS})$ plus two penalty terms: one term is for the individual sparsity and the other term is for the group sparsity.

The distance of two linear spaces $\calM (\bU_1)$ and $\calM (\bU_2)$ with dimension of $r$ is defined as
\begin{align}
\label{eq:distance}
\calD (\calM(\bU_1), \calM(\bU_2))= \left(1- \frac{\tr(\bH_1 \bH_1^\top \bH_2 \bH_2^\top)}{r} \right)^{1/2},
\end{align}
where the columns of $\bH_i$ are an orthonormal basis of $\calM(\bU_i)$ for $i=1,2$ \citep{chang2015}. It is a quantity between 0 and 1. It is 1 if the two spaces are orthogonal and 0 if $\calM (\bU_1)=\calM (\bU_2)$. 

Penalty functions are widely used in regression models for obtaining sparse regression coefficients at the individual level or group level \citep{fan2010selective, huang2012selective}. Some studies also employed penalty functions to achieve sparsity at both individual and group levels. \cite{simon2013sparse} used sparse-group lasso in linear regression models. \cite{tugnait2022sparse} used sparse-group penalty for time series data. There are various penalty functions in the literature. The $L_1$ penalty (lasso) is the most popular convex penalty function \citep{tibshirani1996}. Two widely studied nonconvex penalties are the smoothly clipped absolute deviation (SCAD) penalty \citep{fan2001} and the minimax concave penalty (MCP) \citep{zhang2010}, both of them can achieve oracle properties. Usually, SCAD and MCP have similar performance. In this work, we adopt MCP, and the algorithm can be easily adapted to $L_1$ and SCAD. The MCP is defined as follows: $\mathcal{P}_{\gamma}(x,\lambda) = \lambda\vert x\vert-\frac{x^{2}}{2\gamma}$ if $\vert x\vert\leq\gamma\lambda$, and $\mathcal{P}_{\gamma}(x,\lambda) = \frac{1}{2}\gamma\lambda^{2}$ if $\vert x\vert>\gamma\lambda$, where $\gamma$ is fixed at 3 as in different literature \citep{breheny2011coordinate, breheny2015group}, and $\lambda$ is a tuning parameter selected based on data-driven criteria. To achieve the goal of identifying both individual level and group level sparsity, the penalty function will be applied on individual elements and group vectors in loadings, respectively. 


 Recall that $J_i$ denotes the number of groups in $\bq_i$, and $d_{ij^\prime}$ represents the group size of the $j^\prime$th group in $\bq_i$, for $j^\prime = 1,\dots, J_i$ and $i=1,\dots, r$.  Lemma D.2 in \cite{liu::wang2025} shows that minimizing the distance of ${\cD}({\cal M}(\bU_1), {\cal M} (\bU_2))$ is equivalent to minimizing $\sum_{i=1}^r \| \bH_1 \bH_1^\top- \mathbf{h}_{2i} \mathbf{h}_{2i}^\top\|_F^2$. Therefore, we can use the following steps to estimate the columns of $\bQ = (\bq_1,\bq_2,\dots, \bq_r)$ sequentially, where $q_{ij}$ is the $(i,j)$-th element in $\bQ $, and $\bq_{i(j^\prime)}$ is the $j^{\prime}$-th group in  $\bq_i$:
\begin{enumerate}
\item We estimate $\bq_1$ by solving:
\begin{align}
\label{eq:step1}
\hat{{\bf q}}_{1} & = \arg \min_{\bq_1}\frac{1}{2}\Vert\hat{{\bf S}}\hat{{\bf S}}^{\top}-{\bf q}_{1}{\bf q}_{1}^{\top}\Vert_{F}^{2}+\sum_{j=1}^{p}\mathcal{P}_{\gamma}\left(\vert q_{1j}\vert,\lambda_1\right)+\sum_{j^{\prime}=1}^{J_{1}}\mathcal{P}_{\gamma}\left(\Vert{\bf q}_{1\left(j^{\prime}\right)}\Vert_2,\sqrt{d_{1j^{\prime}}}\lambda_2\right)\\
 & \text{subject to }{\bf q}_{1}^{\top}{\bf q}_{1}=1. \nonumber
\end{align}

\item Let $\tilde{\bs}_1=\hat{\bq}_1$. For $i=2, \ldots, r$, we do the following
\begin{enumerate}
    \item Let $\tilde{\bS}_i= (\tilde{\bs}_1, \ldots, \tilde{\bs}_{i-1})$.
    \item  Estimate $\hat{\bq}_i$ by solving:
   \begin{align}
   \label{eq:step2}
\hat{{\bf q}}_{i} & =\arg \min_{\bq_i}\frac{1}{2}\Vert\hat{{\bf S}}\hat{{\bf S}}^{\top}-{\bf s}_{i}{\bf s}_{i}^{\top}\Vert_{F}^{2}+\sum_{j=1}^{p}\mathcal{P}_{\gamma}\left(\vert q_{ij}\vert,\lambda_1\right)+\sum_{j^{\prime}=1}^{J_{i}}\mathcal{P}_{\gamma}\left(\Vert{\bf q}_{i\left(j^{\prime}\right)}\Vert_2,\sqrt{d_{ij^{\prime}}}\lambda_2\right)\\
 & \text{subject to }{\bf s}_{i}=\left({\bf I}-\tilde{{\bf S}}_{i}\tilde{{\bf S}}_{i}^{\top}\right){\bf q}_{i}\text{ and }{\bf s}_{i}^{\top}{\bf s}_{i}=1. \nonumber
\end{align}

\item Set $\tilde{\bs}_i = (\bI- \tilde{\bS}_i \tilde{\bS}_i^\top)\hat{\bq}_i$.
\item Let $\hat{\bQ}= (\hat{\bq}_1, \ldots, \hat{\bq}_r)$.
\end{enumerate}

\end{enumerate}


In \eqref{eq:step1} and \eqref{eq:step2}, each objective function consists of three terms. The first term measures the distance between $\hat{\bS}$ and the proposed estimator. The second term imposes individual level sparsity in $\bq_i$, and the third term is for the group level sparsity in $\bq_i$. Note that we multiply $\sqrt{d_{ij'}}$ by the tuning parameter $\lambda_2$ in the third term in order to balance the penalty across groups of different sizes.


\subsubsection{The algorithm}
The optimization in \eqref{eq:step1} and \eqref{eq:step2} can be reformatted as the following general minimization problem:
\begin{align}
    \label{eq:optimizaiton0}
\hat{{\bf q}} & =\arg \min_{{\bf q}}\frac{1}{2}\Vert{\bf G}-{\bf B}{\bf q}{\bf q}^{\top}{\bf B}\Vert_{F}^{2}+\sum_{j=1}^{p}\mathcal{P}_{\gamma}\left(\vert q_{j}\vert,\lambda_1\right)+\sum_{j^{\prime}=1}^{J}\mathcal{P}_{\gamma}\left(\Vert{\bf q}_{\left(j^{\prime}\right)}\Vert_2,\sqrt{d_{j^{\prime}}}\lambda_2\right)\\
 & \text{subject to }{\bf q}^{\top}{\bf BB}{\bf q}=1. \nonumber
\end{align}   
In this general optimization problem in \eqref{eq:optimizaiton0}, we use $\bq$ to represent $\bq_i$ and $J$ represents $J_i$.  In particular, for the problem \eqref{eq:step1}, we set $\bG = \hat{\bS} \hat{\bS}^\top$, $\bB = \bI$; and for the problem in \eqref{eq:step2}, we set $\bB = \bI- \tilde{\bS}_i \tilde{\bS}_i^\top$, which satisfies $\bB\bB = \bB$ and $\bB^\top = \bB$. Furthermore, we have $\Vert{\bf G}-{\bf B}{\bf q}{\bf q}^{\top}{\bf B}\Vert_{F}^{2} = \tr(\bG\bG) - 2\bq^\top\bB\bG \bB \bq + 1$, so the first component in \eqref{eq:optimizaiton0} is equivalent to $-\bq^\top \bB \bG \bB \bq$ in the minimization problem.

To solve this optimization problem with constraints and penalty functions in \eqref{eq:optimizaiton0}, we use the alternating direction method of multipliers (ADMM) algorithm \citep{boyd2011distributed}, which is widely used in the literature \citep{ma2017concave, wang2023spatial, tugnait2022sparse}. To implement the ADMM algorithm, we first rewrite the optimization problem in \eqref{eq:optimizaiton0} as follows by introducing $\bs$ and $\bdelta$, which allow the complex problem to be decomposed into simpler sub-problems:
\begin{align}
    \label{eq:optimizaiton}
 & \min_{{\bf q},{\bf s},\bdelta}-{\bf s}^{\top}{\bf G}{\bf B}{\bf q}+\sum_{j=1}^{p}\mathcal{P}_{\gamma}\left(\vert q_{j}\vert,\lambda_1\right)+\sum_{j^{\prime}=1}^{J}\mathcal{P}_{\gamma}\left(\Vert \bdelta_{\left(j^{\prime}\right)}\Vert_2,\sqrt{d_{j^{\prime}}}\lambda_2\right)\\
 & \text{subject to }{\bf s}={\bf Bq},\,\bdelta={\bf q},\,{\bf s}^{\top}{\bf s}=1. \nonumber
\end{align} 

In the ADMM algorithm, the augmented Lagrangian corresponding to \eqref{eq:optimizaiton} has the following form,
\begin{align}
  L\left({\bf s},{\bf q},\bdelta,{\bf v}_{1},{\bf v}_{2}\right)= & -{\bf s}^{\top}{\bf G}{\bf B}{\bf q}+{\bf v}_{1}^{\top}\left({\bf s}-{\bf B}{\bf q}\right)+\frac{\rho_{1}}{2}\Vert{\bf s}-{\bf B}{\bf q}\Vert_{2}^{2}+\\
 & {\bf v}_{2}^{\top}\left(\bdelta-{\bf q}\right)+\frac{\rho_{2}}{2}\Vert\bdelta-{\bf q}\Vert_{2}^{2}+ \nonumber\\
 & +\sum_{j=1}^{p}\mathcal{P}_{\gamma}\left(\vert q_{j}\vert,\lambda_1\right)+\sum_{j^{\prime}=1}^{J}\mathcal{P}_{\gamma}\left(\Vert\bdelta_{\left(j^{\prime}\right)}\Vert_2,\sqrt{d_{j^{\prime}}}\lambda_2\right) \nonumber\\
 & \text{subject to }{\bf s}^{\top}{\bf s}=1,  \nonumber
\end{align}
where $\bv_1$ and $\bv_2$ are $p$-dimensional vectors containing the Lagrange multipliers, and $\rho_1$ and $\rho_2$ are fixed penalty parameters. Here, we set them at 1 as in \citet{ma2017concave} and \citet{wang2023spatial}.
Then, we can update $\bs, \bq, \bdelta, \bv_1, \bv_2$ iteratively. At the $(l+1)$-th iteration, given the current values of $\bs^{(l)}, \bq^{(l)}$, $\bdelta^{(l)}$, $\bv_1^{(l)}$ and $\bv_2^{(l)}$, the updates of $\bs, \bq, \bdelta, \bv_1, \bv_2$ are
\begin{align}
{\bf s}^{\left(l+1\right)} & =\arg\min_{{\bf s}^{\top}{\bf s}=1}L\left({\bf s}, {\bf q}^{\left(l\right)}, \bdelta^{\left(l\right)},{\bf v}_{1}^{\left(l\right)},{\bf v}_{2}^{\left(l\right)}\right),\label{eq:updates0}\\
{\bf q}^{\left(l+1\right)} & =\arg\min_{{\bf q}}L\left({\bf s}^{\left(l+1\right)}, {\bf q},\bdelta^{\left(l\right)},{\bf v}_{1}^{\left(l\right)},{\bf v}_{2}^{\left(l\right)}\right), \label{eq:updateq0}\\
\bdelta^{\left(l+1\right)} & =\arg\min_{\bdelta}L\left({\bf s}^{\left(l+1\right)},{\bf q}^{\left(l+1\right)}, \bdelta,{\bf v}_{1}^{\left(l\right)},{\bf v}_{2}^{\left(l\right)}\right), \label{eq:updatedelta0}\\
{\bf v}_{1}^{\left(l+1\right)} & ={\bf v}_{1}^{\left(l\right)}+\rho_{1}\left({\bf s}^{\left(l+1\right)}-{\bf B}{\bf q}^{\left(l+1\right)}\right), \label{eq:updatev1}\\
{\bf v}_{2}^{\left(l+1\right)} & ={\bf v}_{2}^{\left(l\right)}+\rho_{2}\left(\bdelta^{\left(l+1\right)}-{\bf q}^{\left(l+1\right)}\right). \label{eq:updatev2}
\end{align}

To update $\bs$, minimizing \eqref{eq:updates0} is equivalent to minimizing the following objective function with respect to $\bs$:
$
-{\bf s}^{\top}{\bf G}{\bf B}{\bf q}^{\left(l\right)}-\rho_{1}{\bf s}^{\top}{\bf B}{\bf q}^{\left(l\right)}+{\bf s}^{\top}{\bf v}_{1}^{\left(l\right)}=-{\bf s}^{\top}\left({\bf G}{\bf B}{\bf q}^{\left(l\right)}+\rho_{1}{\bf B}{\bf q}^{\left(l\right)}-{\bf v}_{1}^{\left(l\right)}\right),
$
subject to $\bs^\top \bs = 1$. By Cauchy-Schwarz inequality, the update of $\bs^{(l+1)}$ is
\begin{equation}
    \label{eq:update_s}
    {\bf s}^{\left(l+1\right)}=\frac{{\bf G}{\bf B}{\bf q}^{\left(l\right)}+\rho_{1}{\bf B}{\bf q}^{\left(l\right)}-{\bf v}_{1}^{\left(l\right)}}{\Vert{\bf G}{\bf B}{\bf q}^{\left(l\right)}+\rho_{1}{\bf B}{\bf q}^{\left(l\right)}-{\bf v}_{1}^{\left(l\right)}\Vert_{2}}.
\end{equation}

Note that $\bB \bB = \bB$, and $\bB = \bB^\top$. To update $\bq$ in \eqref{eq:updateq0}, it is equivalent to minimizing the following objective function with respect to $\bq$:
\begin{align*}
 & \frac{1}{2}{\bf q}^{\top}\left(\rho_{1}{\bf B}+\rho_{2}{\bf I}\right){\bf q}-{\bf q}^{\top}\left({\bf B}{\bf G}{\bf s}^{\left(l+1\right)}+\rho_{1}{\bf B}{\bf s}^{\left(l+1\right)}+{\bf B}{\bf v}_{1}^{\left(l\right)}+\rho_{2}\bdelta^{\left(l\right)}+{\bf v}_{2}^{\left(l\right)}\right)\\
 & +\sum_{j=1}^{p}\mathcal{P}_{\gamma}\left(\vert q_{j}\vert,\lambda_1\right).
\end{align*}
Let $\rho_{1}{\bf B}+\rho_{2}{\bf I}={\bf R}^{\top}{\bf R}$, where ${\bf R}$
is an upper triangular matrix with diagonal elements, and define ${\bf b}={\bf B}{\bf G}{\bf s}^{\left(l+1\right)}+\rho_{1}{\bf B}{\bf s}^{\left(l+1\right)}+{\bf B}{\bf v}_{1}^{\left(l\right)}+\rho_{2}\bdelta^{\left(l\right)}+{\bf v}_{2}^{\left(l\right)}$. Then we can re-rewrite the optimization problem as follows:
\begin{equation}
\label{eq:update_q}
\min_{\bq} \frac{1}{2}\Vert\left({\bf R}^{\top}\right)^{-1}{\bf b}-{\bf R}{\bf q}\Vert_{2}^{2}+\sum_{j=1}^{p}\mathcal{P}_{\gamma}\left(\vert q_{j}\vert,\lambda_1\right).   
\end{equation}
\eqref{eq:update_q} can be solved using a gradient algorithm with the MCP penalty. We use the R package {\it ncvreg} \citep{breheny2011coordinate} to obtain the solution for a fixed value of $\lambda$. Note that $L_1$ penalty or SCAD can also be used here. 

When updating $\bdelta$ group-wise, the update for each group $\bdelta_{(j^\prime)}$ is equivalent to solving the  following minimization problem with respect to $\bdelta_{(j^\prime)}$:
\[
\frac{\rho_{2}}{2}\Vert{\bdelta}_{\left(j^{\prime}\right)}-\left({\bf q}_{\left(j^{\prime}\right)}^{\left(l+1\right)}-\rho_{2}^{-1}{\bf v}_{2\left(j^{\prime}\right)}^{\left(l\right)}\right)\Vert_{2}^{2}+\mathcal{P}_{\gamma}\left(\Vert{\bdelta}_{\left(j^{\prime}\right)}\Vert_2,\sqrt{d_{j^{\prime}}}\lambda_{2}\right).
\]
Let
${\bf u}^{\left(l+1\right)}={\bf q}^{\left(l+1\right)}-\rho_{2}^{-1}{\bf v}_{2}^{\left(l\right)}$. For
the MCP, the update of $\bdelta_{(j^\prime)}$ is given by:
\begin{equation}
{\bdelta}_{\left(j^{\prime}\right)}^{\left(l+1\right)}=\begin{cases}
\frac{S\left({\bf u}_{\left(j^{\prime}\right)}^{\left(l+1\right)},\sqrt{d_{j^{\prime}}}\lambda_{2}/\rho_{2}\right)}{1-1/\left(\gamma_{2}\rho_{2}\right)} & \text{if \ensuremath{\Vert{\bf u}_{\left(j^{\prime}\right)}^{\left(l+1\right)}\Vert_2\leq\sqrt{d_{j^{\prime}}}\gamma_{2}\lambda_{2}}}\\
{\bf u}_{\left(j^{\prime}\right)}^{\left(l+1\right)} & \text{if }\Vert{\bf u}_{\left(j^{\prime}\right)}^{\left(l+1\right)}\Vert_2>\sqrt{d_{j^{\prime}}}\gamma_{2}\lambda_{2},
\end{cases}
\label{eq:update_delta}
\end{equation}
where $S({\bf x},\lambda)=(1-\lambda/\Vert{\bf x}\Vert_2)_{+}{\bf x},$
and $(x)_{+}=x$ if $x>0$, 0 otherwise.

In summary, the computational algorithm can be summarized as follows.
\begin{algorithm}[H]
	\caption*{{\bf{Algorithm}:} The optimization algorithm}
	\begin{algorithmic}[1]
     	\REQUIRE: Initialize $\bq^{(0)}$, $\bdelta^{(0)} = \bq^{(0)}$,  $\bv_1^{(0)} = \bf{0}$, and  $\bv_2^{(0)} = \bf{0}$. 
	    \FOR {$i=1$}
         \STATE Set $\bB = \bI$
         \FOR{$l=1,2,\dots,...$}
          \STATE Update $\bs_1$ by \eqref{eq:update_s},  $\bq_1$ by minimizing \eqref{eq:update_q}, $\bdelta_1$ by \eqref{eq:update_delta}, $\bv_1$ by \eqref{eq:updatev1} and $\bv_2$ by \eqref{eq:updatev2}.
         \STATE Stop and get $\tilde{\bs}_1$ and $\hat{\bq}_1$ if convergence criterion is met.
        \ENDFOR
        \ENDFOR
        \FOR {$i=2,\dots ,r$}
        \STATE{Compute $\tilde{\bS}_i = (\tilde{\bs}_1,\dots, \tilde{\bs}_{i-1})$ and $\bB = \bI - \tilde{\bS}_i\tilde{\bS}_i^\top.$}
           \FOR{$l=1,2,\dots,...$}
        \STATE Update $\bs_i$ by \eqref{eq:update_s}, $\bq_i$ by minimizing \eqref{eq:update_q}, $\bdelta_i$ by \eqref{eq:update_delta}, $\bv_1$ by \eqref{eq:updatev1} and $\bv_2$ by \eqref{eq:updatev2}.
        \STATE Stop and get $\tilde{\bs}_i$ and $\hat{\bq}_i$ if convergence criterion is met.
          \ENDFOR
		\ENDFOR
  \STATE Obtain $\hat{\bQ} = (\hat{\bq}_1,\dots, \hat{\bq}_r)$.
	\end{algorithmic}
\end{algorithm}


\begin{remark}
  The stopping criterion is $\Vert \bs - \bB \bq\Vert_2\leq \delta_e$ as in the literature \citep{ma2017concave,wang2023spatial, liu::wang2025}, where $\delta$ is a small positive value. Here we use $\delta_e =  10^{-5}$.
\end{remark}

\begin{remark}
  We use the Bayesian Information Criterion (BIC) to select the tuning parameters. The BIC is defined as 
  \begin{equation}
      BIC(\lambda_1, \lambda_2) = \log(\frac{1}{np} \sum_{t=1}^n\Vert \bx_t - \hat{\bx}_t  \Vert^2 ) + \frac{\log(np)}{np}
    \vert \hat{\bQ} (\lambda_1, \lambda_2) \vert,
  \end{equation}
  where $\hat{\bx}_t = \hat{\bQ} (\hat{\bQ}^\top \hat{\bQ})^{-1} \hat{\bQ}^\top\bx_t$, and $\vert \hat{\bQ} (\lambda_1, \lambda_2) \vert$ is the number of nonzero elements in $\hat{\bQ} (\lambda_1, \lambda_2)$. We adopt a two-step procedure to select the tuning parameters as in the literature \citep{tang2023multivariate, zhang2025simultaneously}. First, we set $\lambda_2 =0$ and select $\lambda_1$ from a sequence of candidate values that minimizes the BIC. Then, holding $\lambda_1$ fixed at the selected value from step 1, we choose $\lambda_2$ from a sequence of candidate values with the smallest BIC to obtain the final estimate. 
\end{remark}

\section{Theoretical properties}
\label{sec:them}

In this section, we will study the asymptotic properties of our proposed estimator.

We use the same regularity conditions as those (C\ref{cond_alphamix})-(C\ref{cond_coherence}) in \cite{liu::wang2025}, which are listed below.

\begin{enumerate}
\renewcommand{\labelenumi}{\textbf{(C\arabic{enumi})}}

\item  Let $\calF_i^j$ be the $\sigma$-field generated by $\{\bff_t^s: i \leq t \leq j\}$.
The joint process $\{\bff_t^s\}$ is
$\alpha$-mixing with mixing coefficients satisfying
$
\sum_{t=1}^{\infty} \alpha(t)^{1-2/\gamma}<\infty,
$
for some $\gamma>2$, where
$\alpha(t)=\sup_{i} \sup_{A \in \calF_{-\infty}^i, B \in \calF_{i+t}^{\infty}} |P(A \cap B) -P(A)P(B)|$.  \label{cond_alphamix}

\item For any $i=1,\ldots, r$, $t=1, \ldots, n$, $E(|f_{t,i}^s|^{2\gamma})< \sigma_f^{2\gamma}$, where $f_{t,i}^s$ is the $i$-th element of $\bff_t^s$,  $\sigma_f>0$ is a constant, and $\gamma$ is given in Condition (C\ref{cond_alphamix}).  \label{cond_fbound}

\item  $\bve_{t}$ and $\bff_t^s$ are uncorrelated given ${\cal F}_{-\infty}^{t-1}$. Let $\bSigma_{e,t}$ be the covariance of $\bve_t$. $|\sigma_{e,t,ij}|<\bSigma_{\epsilon}^2 <\infty$ for $i,j=1,\ldots,p$, and $t=1,\ldots,n$. In other words, the absolute value of each element of $\bSigma_{e,t}$ remains bounded by a constant $\sigma_{\epsilon}^2$ as $p$ increases to infinity, for $t=1,\ldots, n$. \label{cond_cov}


\item  There exists a constant $\delta \in [0,1]$ such that $\|\bA^s\|_2^2  \asymp \|\bA^s\|^2_{\min}\asymp m^{1-\delta}$, as $p$ goes to infinity, where $m=\sum_{i=1}^r m_i$ is the number of nonzero elements in $\bA^s$. Furthermore, $\Vert \bA^s\Vert_{\max} \leq C_1$, where $C_1$ is a positive constant. In addition, $m_1 \asymp m_2 \asymp \ldots \asymp m_r \asymp m$. \label{cond_strength}

\item  $\bM$ has $r$ distinct nonzero eigenvalues. \label{cond_eigenM}

\item $\bve_t$'s are independent sub-Gaussian random vectors. Each random vector in the sequences $\bff_t^s$ follows a sub-Gaussian distribution. \label{cond_subgaussian}
\end{enumerate}

Two primary strategies exist for distinguishing between the noise component and the latent factors. One approach assumes that the idiosyncratic errors exhibit both weak temporal and weak cross-sectional dependence with $\sum_{i=1}^p\sum_{j=1}^p |\sigma_{e,t,ij}|\leq Cp$ for any $t=1,\ldots,n$, where $C$ is a positive constant; see \cite{bai2002}, \cite{bai2003}, \cite{bai2006evaluating}, \cite{bai2008forecasting}, \cite{uematsu2022estimation}, \cite{uematsu2022inference} and among others. The alternative assumes that the noise process is serially uncorrelated, but allows for strong cross-sectional dependence with $|\sigma_{e,t,ij}|<C$ for any $i,j=1,\ldots,p$ and $t=1,\ldots,n$ \citep{lam2011,lam2012,chang2015,wang2019,chen2022}. In this paper, we adopt the latter assumption. Nonetheless, we believe that our framework can be extended to accommodate the former, which we leave as a direction for future research.

Conditions (C\ref{cond_alphamix})--(C\ref{cond_cov}) and Condition (C\ref{cond_eigenM}) are standard assumptions in the literature in factor models \citep{lam2011, lam2012, chang2015, liu2016,wang2019,liu2022} and used to ensure that the estimated autocovariance matrices converge. (C\ref{cond_strength}) gives the assumption of the strength of factors, similar to that in \cite{chang2015}. \cite{liu::wang2025} has a detailed discussion about the role of $m$ and $p$. Condition (C\ref{cond_subgaussian}) is a commonly used assumption in models for high-dimensional data analysis, such as regression models in \cite{ma2017concave} and \cite{wang2023spatial}, and factor models for functional time series, as in \cite{guo2021factor} and \cite{fang2022finite}.

Recall that $\bQ$ is not necessarily an orthogonal matrix. Hence, we impose an assumption to ensure that column vectors in $\bQ$ are well separated as the dimension grows. To achieve this, we first obtain the orthogonal basis of $\cM(\bQ)$ using Gram-Schmidt orthonormalization. Specifically, let $\bS=(\bs_1, \bs_2,\ldots, \bs_r)$, where $\bs_1=\bq_1$, and
$
\bs_i=(\bI-\bS_i\bS_i^\top) \bq_i, 
$
where $\bS_i=(\bs_1,\ldots, \bs_{i-1})$ for $i=2,\ldots, r$. Let $\mathcal{V}_{{i}}$
denote the nonzero indices of ${\bf q}_{i}$ and $\mathcal{V}_{s_{i}}$
denote the nonzero indices of ${\bf s}_{i}$.
We define $\mathcal{V}_{i}^*=\mathcal{V}_{s_{1}}\cup\mathcal{V}_{s_{2}}\dots\cup\mathcal{V}_{s_{i-1}}\cup\mathcal{V}_{i}$, and $\mathcal{N}_{i}^*=\mathcal{V}_{i}^*\backslash\mathcal{V}_{i}$. 
$\mathcal{N}_i^*$ contains indices where the corresponding elements in $\bq_i$ are zero while the corresponding elements in at least one of $\{\bs_i\mid i=1,\ldots, i-1\}$ are nonzero. Note that $\mathcal{N}_i^*$ cannot be an empty set. Otherwise, $(\mathcal{V}_{s_{1}}\cup\mathcal{V}_{s_{2}}\dots\cup\mathcal{V}_{s_{i-1}}) \subset \mathcal{V}_i$, which means that there exists a vector $\bv \in \mathbb{R}^{(i-1)}$ such that $(\bq_i-\bS_i\bv)$ is more sparse than $\bq_i$ and thus $(\bq_1, \ldots, \bq_i-\bS_i\bv)$ is more sparse than $(\bq_1,\ldots,\bq_i)$. If that is true, $\bQ$ would not be one of the loading matrices with most zero elements in $\cM(\bA)$.

Let ${\bf S}_{i,1}={\bf S}_{i[\mathcal{N}_{i}^{*}]}$, we also have the following two assumptions about $\bS$ and one assumption about the group sparsity.  

\begin{enumerate}
\renewcommand{\labelenumi}{\textbf{(C\arabic{enumi})}}
\setcounter{enumi}{6}
\item $\Vert\mathbf{S}_{i,1}\Vert_{\min}
\asymp 1$. \label{cond_si1_eigen} 

\item  There exists a positive constant $C_{\mu} > 1$ such that $\Vert{\bf S}\Vert_{2\rightarrow\infty}\leq C_{\mu}\sqrt{\frac{r}{m}}$. \label{cond_coherence} 


\end{enumerate}

Condition (C\ref{cond_si1_eigen}) indicates that the column vectors in $\bQ$ are far apart and each column vector provides enough information about zero elements as the dimension grows. The bounded coherence assumption in Condition (C\ref{cond_coherence}) is widely used in matrix theory; see examples in \cite{fan2018pertur} and \cite{cape2019two}. \cite{cape2019two} assumes $\Vert\bS\Vert_{2\rightarrow} \leq C_{\mu}\sqrt{\frac{r}{p}}$ for a $p \times r$ orthonormal matrix $\bS$. Since $\bQ$ in our setting is sparse with $m$ nonzero elements, we replace $p$ with $m$ and assume that the sparsity level of $\bS$ is $O(m)$. 

Let $\mathcal{V}_i$ be the index of nonzero elements of ${\bf q}_i$ and $\mathcal{V}^{g}_{i}$ be the index of nonzero groups of ${\bf q}_i$, which is a subset of $\{1,2,\dots, J_i\}$. 
Let $b_1 = \min_i\min_{j\in \mathcal{V}_i} \vert q_{ij}\vert$ and $b_2 = \min_i \min_{j^\prime \in \mathcal{V}^{g}_{i}} \frac{1}{\sqrt{d_{ij^\prime}}} \Vert {\bf q}_{i(j^\prime)}\Vert$, which define minimal signals.  Denote $\phi_{n,p,m} = \max\left(m^{2\delta-2}p^{2}n^{-1/2},m^{\delta}\right)$ and define $\tau_{n,p,m}=\phi_{n,p,m}\sqrt{\frac{\log p}{n}}$ if $m=o(p)$, and $\tau_{n,p,m}=p^{\delta}n^{-1/2}$ if $m=O(p)$. We have the following result for the proposed estimator of the loading matrix. The proof is provided in Appendix B.

\begin{theorem}
\label{thm_est}
Assume that $m_1 < m_2 < \ldots < m_r$, $b_1 > \gamma\lambda_1$ and $b_2 > \gamma\lambda_2$. If $\lambda_1 \gtrsim \tau_{n,p,m}$, $\lambda_2 \gtrsim \tau_{n,p,m}$ and $\tau_{n,p,m}= o(1)$ as $n\rightarrow \infty$ and $p\rightarrow\infty$. Under Conditions (C\ref{cond_alphamix})-(C\ref{cond_si1_eigen}), then 
\begin{align*}
    &\Vert\hat{{\bf Q}}-{\bf Q}\Vert_2= O(\tau_{n,p,m})= \begin{cases}
      O_{p}\left(\phi_{n,p,m}\sqrt{\frac{\log p}{n}}\right) & \text{if } m=o(p), \\  
      O_{p}\left(m^{\delta - 1} p n^{-1/2}\right) = O_{p}\left(p^\delta n^{-1/2}\right) & \text{if } m = O(p),\\
    \end{cases} \\
  & P (\hat{\mathcal{V}}_i = \mathcal{V}_i) = 1, \mbox{ for } i=1,\ldots, r,
\end{align*}
as $n$ and $p$ go to infinity, where $\hat{\mathcal{V}}_i$ contains the indices of nonzero elements in $\hat{\bq}_i$.
\end{theorem}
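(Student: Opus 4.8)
The plan is to combine a deviation bound for the spectral estimator $\hat{\bM}$ with a fine-grained eigenvector perturbation analysis and then an oracle/KKT argument for the penalized sequential optimization. First I would establish a concentration bound for $\hat{\bM}$ around its population version $\bM$. Using the $\alpha$-mixing condition (C\ref{cond_alphamix}), the moment bound (C\ref{cond_fbound}), the noise condition (C\ref{cond_cov}), and sub-Gaussianity (C\ref{cond_subgaussian}), one controls $\hat{\bSigma}_x(h)-\bSigma_x(h)$ entry-wise at rate $\sqrt{\log p/n}$; since $\bM$ is a finite ($h_0$) sum of products $\hat{\bSigma}_x(h)\hat{\bSigma}_x(h)^\top$, this propagates to a bound on $\|\hat{\bM}-\bM\|_2$. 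The factor-strength condition (C\ref{cond_strength}), which forces $\|\bA^s\|_2^2\asymp\|\bA^s\|_{\min}^2\asymp m^{1-\delta}$, pins down the size of the nonzero eigenvalues of $\bM$ (they scale like $m^{2(1-\delta)}$) and, together with the distinct-eigenvalue assumption (C\ref{cond_eigenM}), produces a usable eigengap.

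Second, I would convert the $\hat{\bM}$ deviation into control of the eigenvectors $\hat{\bS}$. Because the loading vectors are sparse and we ultimately need \emph{exact} support recovery, a spectral (Davis--Kahan) bound on $\hat{\bS}$ is not enough; instead I would invoke the two-to-infinity perturbation machinery of \cite{cape2019two}, whose applicability is guaranteed by the bounded-coherence condition (C\ref{cond_coherence}), $\|\bS\|_{2\to\infty}\leq C_\mu\sqrt{r/m}$. This yields an entry-wise bound on $\|\hat{\bS}-\bS\bO\|_{2\to\infty}$ for some $r\times r$ orthogonal rotation $\bO$, at rate $\tau_{n,p,m}$, and it is precisely here that the two regimes $m=o(p)$ and $m=O(p)$ separate according to which term in $\phi_{n,p,m}=\max(m^{2\delta-2}p^2 n^{-1/2},m^\delta)$ dominates.

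Third, with $\hat{\bS}\hat{\bS}^\top$ controlled entry-wise, I would analyze the penalized problems \eqref{eq:step1}--\eqref{eq:step2} by an oracle argument, proceeding by induction on the column index $i$. I would define the oracle minimizer restricted to the true support $\mathcal{V}_i$ and its true nonzero groups $\mathcal{V}^{g}_{i}$, show it attains the $O(\tau_{n,p,m})$ rate, and then verify that this oracle solution satisfies the KKT (subgradient) conditions of the full sparse-group MCP problem with probability tending to one. The minimal-signal assumptions $b_1>\gamma\lambda_1$ and $b_2>\gamma\lambda_2$ place every true signal in the flat region of the MCP, so the penalty introduces no bias on the active set; for the inactive individual elements and inactive groups, the entry-wise bound from the previous step keeps the relevant individual and group subgradients strictly inside their respective thresholds $\lambda_1$ and $\sqrt{d_{ij'}}\lambda_2$, forcing exact zeros and giving $P(\hat{\mathcal{V}}_i=\mathcal{V}_i)\to 1$. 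The induction must carry the estimation error of $\hat{\bq}_1,\ldots,\hat{\bq}_{i-1}$ forward through the projection $\bB=\bI-\tilde{\bS}_i\tilde{\bS}_i^\top$, and condition (C\ref{cond_si1_eigen}), $\|\bS_{i,1}\|_{\min}\asymp 1$, is exactly what keeps this propagation bounded.

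The hard part, I expect, will be this last step: obtaining entry-wise (rather than merely $L_2$) control simultaneously for the individual and group subgradients in the non-orthogonal, sequential estimation, so that both MCP thresholds are respected on the inactive set. The non-convexity of the MCP means one can only argue for a local minimizer possessing the oracle property rather than a unique global solution, and the coupling between the individual penalty $\lambda_1$ and the group penalty $\lambda_2$ in the KKT system---together with the need to track how the sparsity pattern of the orthogonalized $\bs_i$, captured by $\mathcal{N}_i^*$, interacts with that of $\bq_i$---is where the bulk of the technical care will be required.
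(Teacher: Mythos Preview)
Your high-level plan is sound and aligns with the paper's architecture: Lemma~\ref{lemma_uE} in the paper is precisely the fine-grained (max-norm) perturbation control of $\hat{\bS}\hat{\bS}^\top-\bS\bS^\top$ you describe, the oracle estimator is defined on the true support $\mathcal{V}_i$ exactly as you propose, and the argument proceeds column-by-column with the projection $\bI-\tilde{\bS}_i\tilde{\bS}_i^\top$ carrying errors forward. Where you diverge is in the mechanism for the third step. You propose to certify the oracle as a solution of the full penalized problem by checking KKT/subgradient conditions on the inactive individual and group coordinates. The paper instead uses a direct objective-value comparison: for any feasible competitor $\bu$ with $\|\bu-\bq_i\|_2=O_p(\tau_{n,p,m})$, it constructs an intermediate point $\tilde{\bu}$ supported on $\mathcal{V}_i$ and shows $G(\hat{\bq}_i^{or})<G(\tilde{\bu})\leq G(\bu)$, where the second inequality comes from bounding the quadratic (data-fidelity) increment by $O_p(\tau_{n,p,m})\sum_{j\notin\mathcal{V}_i}|u_j|$ via Lemma~\ref{lemma_uE} and then letting the individual MCP term $-\sum_{j\notin\mathcal{V}_i}(\lambda_1-|u_j|/2\gamma)|u_j|$ and the group MCP term $-\sum_{j'\notin\mathcal{V}_i^g}(\sqrt{d_{ij'}}\lambda_2-\|\bu_{(j')}\|_2/2\gamma)\|\bu_{(j')}\|_2$ absorb it. This direct comparison delivers strict local minimality in one stroke, whereas a KKT route would give only stationarity and would require an additional local-convexity or second-order argument to upgrade to a local minimum---a step you do not mention and which, with a nonconvex penalty and a nonconvex $\bq\bq^\top$ term under the unit-norm constraint, is not entirely trivial. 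Your approach can be made to work, but the paper's function-value comparison is shorter and sidesteps the need to handle the Lagrange multiplier for the norm constraint and the interaction of the two penalties at the subgradient level.
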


Theorem \ref{thm_est} shows that the proposed estimator is consistent under some regularity conditions. It also reveals that the estimation error depends on the sparsity level of the loading matrix. It converges to zero as fast as the estimator proposed in \cite{liu::wang2025}. If $m$ has the same order as $p$, $\hat{\bQ}$ converges at the same rate as the estimator proposed in \cite{lam2012}. If $m =o(p)$, the convergence rate of $\bQ$ is determined by two terms; the first one is the squared bias and the second one is variance. When $\delta <1$ and the loading matrix is quite sparse, the estimation error is dominated by the first term, $O(m^{2\delta-2}p^2n^{-1/2}\sqrt{\frac{\log p}{n}})$; the more sparse the loading matrix is, the larger the bias is. When $\delta <1$ and the loading matrix is not very sparse, the error is dominated by $O(m^\delta \sqrt{\frac{\log p}{n}})$. The more sparse the loading matrix, the smaller the variance is. Moreover, the results also indicate that the $\mathcal{V}_i^g$ can be recovered with probability approaching 1. 

\begin{remark}
We do not impose specific assumptions on $J_i$, ${\cal G}_i$, $\{d_{ij}\}$, or the group sparsity structure. The number of groups and sizes of groups can be fixed or can grow to infinity as $p$ grows. These conditions do not influence the convergence rate of the proposed estimator. This is because the tuning parameter $\lambda_2$ is multiplied by $\sqrt{d_{ij}}$ in the objective functions \eqref{eq:step1} and \eqref{eq:step2} to eliminate the unbalanced impacts of different group sizes have on the penalty term. 
\end{remark}

The result in Theorem \ref{thm_est} implies the following Corollary. 
\begin{corollary}
    \label{thm_estimate}
    If all eigenvalues of ${\bf \Sigma}_{e,t}$ are uniformly bounded from infinity as $p\rightarrow \infty$, it holds that 
    \begin{equation}
        p^{-1/2}\Vert\hat{{\bf Q}}\hat{{\bf z}}_{t}-{\bf Q}{\bf z}_{t}\Vert_{2}=O_{p}\left(p^{-1/2}m^{1/2-\delta/2}\Vert\hat{{\bf Q}}-{\bf Q}\Vert_{2}+p^{-1/2}\right),
    \end{equation}
    as $n$ and $p$ go to infinity.
\end{corollary}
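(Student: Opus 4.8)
The plan is to write the estimated common component as an orthogonal projection and to split the error into a subspace part and a noise part. Since the BIC construction gives $\hat{\bz}_t = (\hat{\bQ}^\top\hat{\bQ})^{-1}\hat{\bQ}^\top\bx_t$, we have $\hat{\bQ}\hat{\bz}_t = \bP_{\hat{\bQ}}\bx_t$, where $\bP_{\hat{\bQ}} = \hat{\bQ}(\hat{\bQ}^\top\hat{\bQ})^{-1}\hat{\bQ}^\top$ is the projection onto $\cM(\hat{\bQ})$. Plugging $\bx_t = \bQ\bz_t + \bve_t$ and using $\bP_{\hat{\bQ}}\bQ\bz_t = \bQ\bz_t - (\bI-\bP_{\hat{\bQ}})\bQ\bz_t$ yields
\begin{equation*}
\hat{\bQ}\hat{\bz}_t - \bQ\bz_t = -(\bI - \bP_{\hat{\bQ}})\bQ\bz_t + \bP_{\hat{\bQ}}\bve_t .
\end{equation*}
I would bound the two terms separately, showing that the first produces the $p^{-1/2}m^{1/2-\delta/2}\Vert\hat{\bQ}-\bQ\Vert_2$ contribution and the second the $p^{-1/2}$ contribution.

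For the subspace term the key identity is $(\bI-\bP_{\hat{\bQ}})\bQ = (\bI-\bP_{\hat{\bQ}})(\bQ - \hat{\bQ})$, which holds because $(\bI-\bP_{\hat{\bQ}})\hat{\bQ} = \bzero$; since $\Vert\bI-\bP_{\hat{\bQ}}\Vert_2 \leq 1$, this gives $\Vert(\bI-\bP_{\hat{\bQ}})\bQ\Vert_2 \leq \Vert\hat{\bQ}-\bQ\Vert_2$. It then remains to control $\Vert\bz_t\Vert_2$. Because $\bQ\bz_t = \bA^s\bff_t^s$, Condition (C4) gives $\Vert\bQ\bz_t\Vert_2 \leq \Vert\bA^s\Vert_2\Vert\bff_t^s\Vert_2 = O_p(m^{(1-\delta)/2})$ by the bounded moments of $\bff_t^s$ (Condition (C2)) and fixed $r$; combined with $\sigma_{\min}(\bQ)\asymp 1$ (which follows from the column-separation Conditions (C7)--(C8), as $\bS$ is an orthonormal basis of $\cM(\bQ)$ and $\bQ=\bS\bT$ with well-conditioned $\bT$) this yields $\Vert\bz_t\Vert_2 = \Vert(\bQ^\top\bQ)^{-1}\bQ^\top(\bQ\bz_t)\Vert_2 = O_p(m^{1/2-\delta/2})$. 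Hence $p^{-1/2}\Vert(\bI-\bP_{\hat{\bQ}})\bQ\bz_t\Vert_2 = O_p\!\left(p^{-1/2}m^{1/2-\delta/2}\Vert\hat{\bQ}-\bQ\Vert_2\right)$.

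The noise term is the crux: I would establish $\Vert\bP_{\hat{\bQ}}\bve_t\Vert_2 = O_p(1)$, so that $p^{-1/2}\Vert\bP_{\hat{\bQ}}\bve_t\Vert_2 = O_p(p^{-1/2})$. The difficulty is that $\bP_{\hat{\bQ}}$ is data-dependent and correlated with $\bve_t$, so one cannot simply condition on $\hat{\bQ}$, while the crude bound $\Vert\bP_{\hat{\bQ}}\bve_t\Vert_2 \leq \Vert\bve_t\Vert_2 = O_p(\sqrt{p})$ is far too loose. My strategy combines three ingredients from the hypotheses and Theorem \ref{thm_est}: (i) exact support recovery, $\hat{\mathcal{V}}_i = \mathcal{V}_i$ with probability one, which confines $\bP_{\hat{\bQ}}\bve_t$ to the fixed coordinate set $\cup_i \mathcal{V}_i$ of cardinality $\asymp m$; (ii) the consistency $\Vert\hat{\bQ}-\bQ\Vert_2 = o_p(1)$, which lets me write $\bP_{\hat{\bQ}} = \bP_{\bQ} + (\bP_{\hat{\bQ}}-\bP_{\bQ})$ with $\Vert\bP_{\hat{\bQ}}-\bP_{\bQ}\Vert_2 = O_p(\Vert\hat{\bQ}-\bQ\Vert_2)$ via the standard subspace-perturbation bound using $\sigma_{\min}(\bQ)\asymp 1$; and (iii) the bounded eigenvalues of $\bSigma_{e,t}$ with sub-Gaussianity (Condition (C6)). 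For the leading piece $\bP_{\bQ}\bve_t = \bQ(\bQ^\top\bQ)^{-1}\bQ^\top\bve_t$, each coordinate $\bq_i^\top\bve_t$ is a weighted sum over the $\asymp m$ support entries with weights of order $m^{-1/2}$, so its variance is $O(1)$ and $\Vert\bP_{\bQ}\bve_t\Vert_2 = O_p(1)$ over the fixed $r$-dimensional space.

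The residual piece $(\bP_{\hat{\bQ}}-\bP_{\bQ})\bve_t$ is where the dependence between $\hat{\bQ}$ and $\bve_t$ must be handled. I would argue this dependence is asymptotically negligible by exploiting that a single $\bve_t$ enters $\hat{\bM}$ only through $O(1/n)$-weighted averages (a leave-one-out type argument), so that $\hat{\bQ}$ may be treated as effectively independent of $\bve_t$; then, since $\bP_{\hat{\bQ}}-\bP_{\bQ}$ has rank at most $2r$, $\mathrm{E}\Vert(\bP_{\hat{\bQ}}-\bP_{\bQ})\bve_t\Vert_2^2 \leq \lambda_{\max}(\bSigma_{e,t})\Vert\bP_{\hat{\bQ}}-\bP_{\bQ}\Vert_F^2 = O_p(\Vert\hat{\bQ}-\bQ\Vert_2^2)$, giving $O_p(\Vert\hat{\bQ}-\bQ\Vert_2)$, which is dominated by the first contribution. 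Collecting the two bounds then yields the stated rate. I expect ingredient (iii) — the rigorous decoupling of $\hat{\bQ}$ from $\bve_t$ needed to push $\Vert\bP_{\hat{\bQ}}\bve_t\Vert_2$ down to $O_p(1)$ rather than the naive $O_p(\sqrt{m})$ — to be the main technical hurdle.
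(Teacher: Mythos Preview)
The paper provides no proof of this corollary: it simply asserts that Theorem~\ref{thm_est} ``implies'' it, and nothing further appears in the appendix. Your decomposition $\hat\bQ\hat\bz_t-\bQ\bz_t=-(\bI-\bP_{\hat\bQ})\bQ\bz_t+\bP_{\hat\bQ}\bve_t$ is exactly the standard route for such signal--extraction results (cf.\ the analogous corollaries in \cite{lam2012} and in \cite{liu::wang2025}, on which this paper is built), so in that sense you are following the intended argument.

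Your handling of the subspace term is correct. One small simplification: you do not need to go through $\sigma_{\min}(\bQ)$ to bound $\|\bz_t\|_2$. Since $\ba_i^s=\|\bP_i\ba_i^s\|_2\,\bq_i$, the identity $\bQ\bz_t=\bA^s\bff_t^s$ gives $\bz_t=\bD\bff_t^s$ with $\bD=\diag(\|\bP_i\ba_i^s\|_2)$, whence $\|\bz_t\|_2\le\max_i\|\ba_i^s\|_2\cdot\|\bff_t^s\|_2=O_p(m^{(1-\delta)/2})$ directly from (C2) and (C4). The bound $\sigma_{\min}(\bQ)\asymp 1$ is still needed for the projection perturbation $\|\bP_{\hat\bQ}-\bP_\bQ\|_2=O(\|\hat\bQ-\bQ\|_2)$, and your justification via $\bQ=\bS\bT$ with $\bT$ unit upper--triangular and bounded entries is fine under (C4).

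On the noise term you are being more careful than the paper (and much of this literature). The usual device is simply to write $\mathrm{E}\|\bP_{\hat\bQ}\bve_t\|_2^2\le r\,\lambda_{\max}(\bSigma_{e,t})=O(1)$, tacitly treating $\hat\bQ$ as independent of $\bve_t$; that is almost certainly what the paper has in mind by ``implied''. You are right that this is not rigorous, and that the crude deterministic bound $\|(\bP_{\hat\bQ}-\bP_\bQ)\bve_t\|_2\le\|\bP_{\hat\bQ}-\bP_\bQ\|_2\,\|\bve_{t,[\cup_i\mathcal V_i]}\|_2=O_p(m^{1/2}\|\hat\bQ-\bQ\|_2)$ overshoots the stated rate by a factor $m^{\delta/2}$ when $\delta>0$. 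Your leave--one--out decoupling is the natural way to close this gap honestly; since the paper offers no argument at all here, your proposal is a strict refinement of, rather than a departure from, the paper's treatment.
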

 Corollary \ref{thm_estimate} indicates that the estimated common component is also consistent.

\section{Simulation Study}
\label{sec:sim}

In this section, we present several simulated examples to evaluate the performance of the proposed approach and compare it with other existing approaches. We consider a scenario where all loadings have the same group structure in Section \ref{subsec_example1}, and consider a scenario with two distinct group structures in Section \ref{subsec_example2}.


To evaluate estimation accuracy, we report the estimation error of the loading space for a fair comparison, $\cD({\cM}(\bA^s), {\cM}(\hat{\bQ}))$, as defined in (\ref{eq:distance}). To assess the performance of identifying sparsity, we report the false negative value (``FN", the number of elements falsely identified as zero), false positive (``FP", the number of elements falsely identify as nonzero), and F1 score, which is a number between 0 and 1 and measures the classification accuracy (nonzero or zero). A higher F1 score indicates better identification performance. 

We consider three methods in our comparison. ``eigen'' refers to the standard method by \cite{lam2011}, presented in Section \ref{subsec:reviewlam}, ``sparse'' refers to the method in \cite{liu::wang2025}, which consider individual sparsity in the loading matrix but not group sparsity; and ``sparsegroup" refers to our proposed approach, which accommodates both individual and group sparsity. We compare these approaches for $p = 60, 120, 200$ and $n = 50, 100, 200, 500$. Note that the sparsity structures of $\bA^s$ and $\bQ$ are the same.

Datasets are simulated from the model \eqref{eq::model112}. The nonzero elements of $\bA^s$ are drawn from a truncated standard normal distribution, with absolute values bounded above by 0.1. We set $r=3$, and the factor process $\mathbf{f}_t^s$ is generated from three independent AR(1) processes, each with an AR coefficient of $0.9$ and an innovation variance of 1. The diagonal elements of $\bSigma_{e,t}$ are all set to 1, and the off-diagonal elements are set to 0.5. The number of factors is assumed to be known. For each setting, we generate 500 samples and compare the estimation results.

\subsection{Example 1}
\label{subsec_example1}

In $\bA^s$, each column consists of five groups and shares the same group structure, i.e., ${\cal G}_1={\cal G}_2={\cal G}_3$. The group sizes $d_{ij^\prime}$ are $p/6$, $p/6$, $p/6$, $p/4$ and $p/4$, respectively. In $\ba_1^s$, the first three groups are nonzero, with the last $1/3$ of the elements in each group set to zero. In $\ba_2^s$, the first four groups are nonzero, with the first $1/3$ of the elements in each group set to zero. In $\ba_3^s$, the last four groups are nonzero. Figure \ref{fig:sparse_structure} illustrates the transpose of the loading matrix when $p=60$. In the figure, the first, second, and third rows represent the sparsity structures of $\ba_1^s$, $\ba_2^s$, and $\ba_3^s$, respectively.  

\begin{figure}[H]
    \centering
    \includegraphics[scale=1]{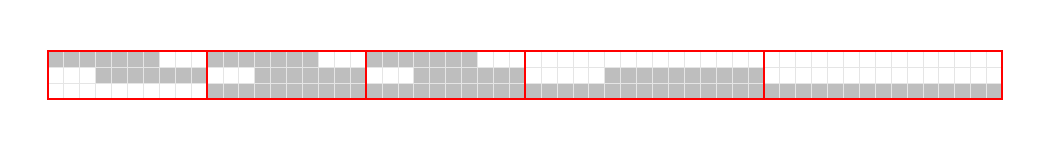}
    \caption{Sparse structure of the transpose of $\bA^s$ in Example 1. Grey cells represent the elements that are nonzero, while white cells represent the elements that are zero. }
    \label{fig:sparse_structure}
\end{figure}



Table \ref{tab:dist} reports the distance between the estimated loading space and the true loading space for $p = 60, 120, 200$ and $n = 50, 100, 200, 500$. It can be observed that both the sparse factor model and the sparse group factor model outperform the standard approach. The proposed method estimates the loading matrix more accurately than the method of \cite{liu::wang2025}, particularly when the sample size is small. As the sample size increases, the advantage of the proposed method diminishes, which confirms that the proposed estimator converges as fast as the one proposed in \cite{liu::wang2025}.

\begin{table}[H]
\centering
\caption{Distance between the estimated loading space and the true loading space under Example 1}
\label{tab:dist}
\begin{tabular}{lllll}
  \hline
$p$ & $n$ & eigen & sparse & sparsegroup \\ 
  \hline
60 & 50 & 0.179(0.078) & 0.168(0.081) & 0.155(0.080) \\ 
  60 & 100 & 0.083(0.026) & 0.074(0.026) & 0.068(0.023) \\ 
  60 & 200 & 0.045(0.013) & 0.038(0.012) & 0.036(0.011) \\ 
  60 & 500 & 0.024(0.006) & 0.020(0.006) & 0.019(0.005) \\ 
  \hline
  120 & 50 & 0.175(0.065) & 0.162(0.067) & 0.147(0.064) \\ 
  120 & 100 & 0.083(0.025) & 0.072(0.024) & 0.066(0.022) \\ 
  120 & 200 & 0.045(0.013) & 0.037(0.011) & 0.035(0.010) \\ 
  120 & 500 & 0.024(0.006) & 0.019(0.004) & 0.018(0.004) \\ 
  \hline
  200 & 50 & 0.178(0.073) & 0.164(0.076) & 0.150(0.073) \\ 
  200 & 100 & 0.081(0.025) & 0.070(0.024) & 0.065(0.021) \\ 
  200 & 200 & 0.044(0.012) & 0.036(0.010) & 0.034(0.009) \\ 
  200 & 500 & 0.024(0.005) & 0.018(0.005) & 0.018(0.004) \\ 
   \hline
\end{tabular}
\end{table}

Tables \ref{tab:ex1_60}, \ref{tab:ex1_120}, and \ref{tab:ex1_200} report the false negatives (``FN", the number of elements falsely identified as zero), false positives (``FP", the number of elements falsely identified as nonzero), and F1 score for various setups under the sparse structure shown in Figure \ref{fig:example}. We observe that the sparse group factor model outperforms the sparse factor model in identifying nonzero elements, particularly when the sample sizes are not large. 

\begin{table}[H]
\centering
\caption{Mean and standard deviation (in parentheses) for sparsity identification when $p= 60$}
\label{tab:ex1_60}
{\scriptsize
\begin{tabular}{ll|ll|ll|ll|ll}
  \hline
& & \multicolumn{2}{c|}{$n=50$} & \multicolumn{2}{c|}{$n=100$} & \multicolumn{2}{c|}{$n=200$} & \multicolumn{2}{c}{$n=500$} \\ 
  \hline
& & sparse & sparsegroup & sparse & sparsegroup & sparse & sparsegroup & sparse & sparsegroup \\ 
\hline
\multirow{3}{*}{loadings 1} & FN & 2.49 (2.81) & 2.04 (2.59) & 1.01 (1.70) & 0.86 (1.62) & 0.35 (0.88) & 0.32 (0.85) & 0.13 (0.90) & 0.13 (0.90) \\ 
 & FP & 10.9 (7.73) & 6.95 (7.77) & 8.09 (7.46) & 3.84 (5.52) & 3.44 (4.91) & 1.66 (2.90) & 2.30 (4.70) & 1.29 (3.07) \\ 
 & F1 & 0.75 (0.15) & 0.82 (0.16) & 0.83 (0.13) & 0.90 (0.11) & 0.92 (0.09) & 0.96 (0.06) & 0.95 (0.09) & 0.97 (0.07) \\ 
  \hline
 \multirow{3}{*}{loadings 2} & FN & 4.08 (3.37) & 3.71 (3.66) & 1.86 (2.20) & 1.72 (2.48) & 0.88 (1.33) & 0.80 (1.44) & 0.51 (1.88) & 0.56 (1.91) \\ 
  &FP & 8.68 (7.26) & 7.10 (7.12) & 4.37 (5.53) & 3.16 (4.30) & 2.28 (3.64) & 1.76 (2.68) & 1.32 (2.84) & 1.86 (3.28) \\ 
  &F1 & 0.82 (0.13) & 0.84 (0.13) & 0.91 (0.09) & 0.93 (0.08) & 0.95 (0.06) & 0.96 (0.05) & 0.97 (0.07) & 0.96 (0.07) \\ 
  \hline
  \multirow{3}{*}{loadings 3} & FN & 7.17 (3.32) & 5.51 (3.36) & 4.02 (2.33) & 2.93 (2.04) & 2.39 (1.77) & 1.62 (1.45) & 1.30 (1.35) & 0.66 (0.89) \\ 
 & FP & 2.41 (2.48) & 0.96 (2.43) & 1.45 (2.10) & 0.50 (1.34) & 1.30 (2.33) & 0.41 (0.92) & 1.91 (3.01) & 0.76 (1.48) \\ 
  &F1 & 0.90 (0.05) & 0.93 (0.05) & 0.94 (0.04) & 0.96 (0.03) & 0.96 (0.03) & 0.98 (0.02) & 0.97 (0.04) & 0.99 (0.02) \\ 
   \hline
\end{tabular}
}
\end{table}

\begin{table}[H]
\centering
\caption{Mean and standard deviation (in parentheses) for sparsity identification when $p= 120$}
\label{tab:ex1_120}
{\scriptsize
\begin{tabular}{ll|ll|ll|ll|ll}
  \hline
& & \multicolumn{2}{c|}{$n=50$} & \multicolumn{2}{c|}{$n=100$} & \multicolumn{2}{c|}{$n=200$} & \multicolumn{2}{c}{$n=500$} \\ 
  \hline
& & sparse & sparsegroup & sparse & sparsegroup & sparse & sparsegroup & sparse & sparsegroup \\ 
\hline
\multirow{3}{*}{loadings 1} & FN & 4.59 (5.43) & 3.85 (5.34) & 1.78 (3.21) & 1.48 (3.08) & 0.50 (0.77) & 0.41 (0.68) & 0.09 (0.32) & 0.08 (0.30) \\ 
 & FP & 21.7 (15.7) & 13.4 (15.1) & 14.7 (14.4) & 6.93 (9.72) & 7.79 (10.6) & 3.56 (4.56) & 2.85 (4.64) & 2.16 (2.67) \\ 
&  F1 & 0.74 (0.16) & 0.82 (0.17) & 0.84 (0.14) & 0.91 (0.11) & 0.91 (0.09) & 0.95 (0.05) & 0.97 (0.05) & 0.97 (0.03) \\
  \hline
  \multirow{3}{*}{loadings 2} & FN & 7.36 (5.79) & 6.60 (7.00) & 3.17 (3.53) & 2.78 (4.13) & 1.16 (1.15) & 0.95 (1.04) & 0.24 (0.54) & 0.23 (0.51) \\ 
 & FP & 16.7 (15.0) & 13.6 (13.5) & 7.71 (10.2) & 5.59 (7.54) & 2.98 (5.64) & 2.21 (3.40) & 0.81 (2.03) & 0.97 (2.55) \\ 
 & F1 & 0.82 (0.13) & 0.84 (0.13) & 0.92 (0.09) & 0.93 (0.08) & 0.97 (0.04) & 0.97 (0.03) & 0.99 (0.02) & 0.99 (0.02) \\ 
  \hline
  \multirow{3}{*}{loadings 3} & FN & 13.9 (6.00) & 10.7 (6.13) & 6.99 (3.32) & 5.52 (2.93) & 3.81 (2.39) & 2.96 (2.10) & 1.39 (1.46) & 0.91 (1.18) \\ 
 & FP & 3.32 (4.15) & 1.02 (3.48) & 1.40 (2.45) & 0.39 (1.36) & 0.70 (1.99) & 0.43 (1.04) & 0.90 (3.00) & 0.70 (1.30) \\ 
&  F1 & 0.91 (0.04) & 0.94 (0.05) & 0.96 (0.02) & 0.97 (0.02) & 0.98 (0.02) & 0.98 (0.01) & 0.99 (0.02) & 0.99 (0.01) \\ 
   \hline
\end{tabular}
}
\end{table}

\begin{table}[H]
\centering
\caption{Mean and standard deviation (in parentheses) for sparsity identification when $p= 200$}
\label{tab:ex1_200}
{\scriptsize
\begin{tabular}{ll|ll|ll|ll|ll}
  \hline
& & \multicolumn{2}{c|}{$n=50$} & \multicolumn{2}{c|}{$n=100$} & \multicolumn{2}{c|}{$n=200$} & \multicolumn{2}{c}{$n=500$} \\ 
  \hline
& & sparse & sparsegroup & sparse & sparsegroup & sparse & sparsegroup & sparse & sparsegroup \\ 
\hline
\multirow{3}{*}{loadings 1} & FN & 8.15 (9.80) & 6.82 (9.32) & 2.61 (4.36) & 2.21 (4.22) & 0.85 (0.99) & 0.65 (0.89) & 0.13 (0.43) & 0.10 (0.39) \\ 
  &FP & 35.1 (25.1) & 20.5 (23.7) & 21.6 (21.1) & 9.40 (12.2) & 12.2 (16.1) & 5.56 (5.90) & 5.14 (11.3) & 2.85 (4.00) \\ 
  & F1 & 0.74 (0.17) & 0.83 (0.17) & 0.85 (0.12) & 0.92 (0.09) & 0.92 (0.09) & 0.96 (0.04) & 0.97 (0.06) & 0.98 (0.03) \\ 
  \hline
 \multirow{3}{*}{loadings 2} &FN & 12.8 (10.0) & 11.4 (12.2) & 4.83 (3.78) & 4.05 (4.71) & 1.91 (1.59) & 1.47 (1.38) & 0.38 (0.72) & 0.34 (0.69) \\ 
  & FP & 28.3 (24.8) & 22.0 (22.3) & 12.01 (16.3) & 9.09 (11.9) & 4.04 (7.54) & 3.36 (5.07) & 1.41 (2.73) & 1.38 (2.95) \\ 
  & F1 & 0.82 (0.13) & 0.84 (0.13) & 0.92 (0.07) & 0.94 (0.06) & 0.97 (0.03) & 0.98 (0.03) & 0.99 (0.01) & 0.99 (0.01) \\ 
  \hline
  \multirow{3}{*}{loadings 3} &FN & 22.1 (8.87) & 17.3 (10.2) & 10.9 (4.92) & 8.71 (4.04) & 5.63 (3.07) & 4.32 (2.71) & 1.62 (1.61) & 1.35 (1.45) \\ 
  & FP & 5.34 (6.67) & 1.71 (6.17) & 1.63 (3.06) & 0.42 (1.23) & 0.62 (1.86) & 0.63 (1.32) & 0.19 (1.26) & 1.09 (1.83) \\ 
  & F1 & 0.91 (0.04) & 0.94 (0.05) & 0.96 (0.02) & 0.97 (0.01) & 0.98 (0.01) & 0.98 (0.01) & 0.99 (0.01) & 0.99 (0.01) \\ 
   \hline
\end{tabular}
}
\end{table}

\subsection{Example 2}
\label{subsec_example2}

In this example, we have ${\cal G}_1={\cal G}_2 \neq {\cal G}_3$, with $J_1 = J_2 = 5$ and $J_3 = 6$. $\ba_1^s$ and $\ba_2^s$ have the same group structure, while the group structure of $\ba_3^s$ differs from the other two. For $\ba_1^s$ and $\ba_2^s$, the group sizes $d_{ij^\prime}$ are $p/6$, $p/6$, $p/6$, $p/4$ and $p/4$. For $\ba_3^s$, each group has a size of $p/6$. Figure \ref{fig:example2} illustrates the sparsity structure of the transpose of the loading matrix when $p=60$. In the first and second rows, different groups are separated by red lines, while in the third row, different groups are separated by blue lines.

\begin{figure}[H]
    \centering
    \includegraphics[scale =1]{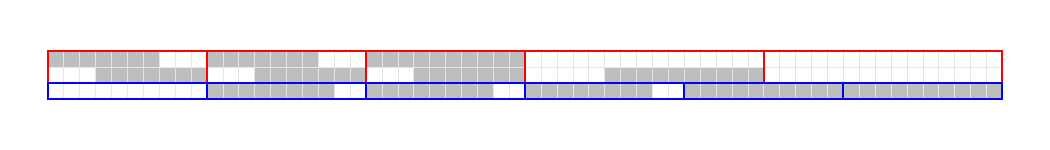}
    \caption{Sparse structure of the transpose of $\bA^s$ in Example 1. Grey cells represent the elements that are nonzero, while white cells represent the elements that are zero.}
    \label{fig:example2}
\end{figure}


Table \ref{tab:dist2} presents the distance between the estimated loading space and the true loading space for $p = 60, 120, 200$ and $n = 50, 100, 200, 500$ under the sparse structure shown in Figure \ref{fig:example2}. Tables \ref{tab:ex2_60}, \ref{tab:ex2_120}, \ref{tab:ex2_200} report ``FN", ``FP" and F1 score for comparing the identification of nonzero elements. We observe similar patterns to those in Section \ref{subsec_example2}. In terms of estimating loading space and identifying nonzero elements, the proposed sparse group factor model outperforms both the standard method and the method of \cite{liu::wang2025}, which does not use the group information.

\begin{table}[ht]
\centering
\caption{Distance between the estimated loading space and the true loading space under Example 2}
\label{tab:dist2}
\begin{tabular}{lllll}
  \hline
$p$ & $n$ & eigen & sparse & sparsegroup \\ 
  \hline
60 & 50 & 0.179(0.078) & 0.167(0.081) & 0.155(0.080) \\ 
  60 & 100 & 0.083(0.026) & 0.073(0.026) & 0.068(0.023) \\ 
  60 & 200 & 0.043(0.012) & 0.037(0.011) & 0.035(0.010) \\ 
  60 & 500 & 0.024(0.006) & 0.019(0.005) & 0.019(0.005) \\ 
  \hline
  120 & 50 & 0.176(0.071) & 0.161(0.073) & 0.149(0.070) \\ 
  120 & 100 & 0.081(0.023) & 0.069(0.021) & 0.064(0.019) \\ 
  120 & 200 & 0.044(0.011) & 0.036(0.010) & 0.034(0.009) \\ 
  120 & 500 & 0.023(0.005) & 0.018(0.004) & 0.018(0.004) \\ 
  \hline
  200 & 50 & 0.176(0.074) & 0.162(0.077) & 0.149(0.075) \\ 
  200 & 100 & 0.082(0.026) & 0.071(0.024) & 0.066(0.021) \\ 
  200 & 200 & 0.043(0.012) & 0.036(0.010) & 0.034(0.009) \\ 
  200 & 500 & 0.023(0.005) & 0.018(0.004) & 0.018(0.004) \\ 
   \hline
\end{tabular}
\end{table}

\begin{table}[H]
\centering
\caption{Mean and standard deviation (in parentheses) for sparsity identification when $p= 60$}
\label{tab:ex2_60}
{\scriptsize
\begin{tabular}{ll|ll|ll|ll|ll}
  \hline
& & \multicolumn{2}{c|}{$n=50$} & \multicolumn{2}{c|}{$n=100$} & \multicolumn{2}{c|}{$n=200$} & \multicolumn{2}{c}{$n=500$} \\ 
  \hline
&  & sparse & sparsegroup & sparse & sparsegroup & sparse & sparsegroup & sparse & sparsegroup \\ 
\hline
\multirow{3}{*}{loadings 1} & FN & 3.60 (3.45) & 2.97 (3.21) & 2.00 (2.96) & 1.72 (2.79) & 1.04 (2.41) & 0.98 (2.36) & 0.58 (2.04) & 0.56 (1.97) \\ 
 & FP & 9.44 (7.10) & 6.13 (7.16) & 6.15 (6.51) & 3.32 (5.53) & 3.37 (5.15) & 2.01 (4.32) & 1.88 (4.34) & 1.30 (3.65) \\ 
 & F1 & 0.77 (0.16) & 0.83 (0.16) & 0.85 (0.15) & 0.91 (0.14) & 0.92 (0.12) & 0.94 (0.12) & 0.96 (0.11) & 0.97 (0.10) \\ 
 \hline
 \multirow{3}{*}{loadings 2} & FN & 5.20 (4.45) & 4.78 (4.79) & 2.88 (3.84) & 2.73 (4.06) & 1.77 (3.77) & 1.71 (3.86) & 1.18 (3.55) & 1.15 (3.46) \\ 
  & FP & 9.70 (7.66) & 8.29 (7.51) & 5.08 (5.95) & 3.84 (4.82) & 2.57 (4.29) & 2.20 (3.46) & 1.89 (3.67) & 2.02 (3.53) \\ 
  & F1 & 0.78 (0.15) & 0.81 (0.15) & 0.88 (0.13) & 0.90 (0.12) & 0.93 (0.12) & 0.94 (0.11) & 0.95 (0.11) & 0.95 (0.11) \\ 
  \hline
 \multirow{3}{*}{loadings 3} & FN & 5.92 (3.83) & 4.53 (3.50) & 3.08 (2.71) & 2.44 (2.48) & 1.54 (1.77) & 1.15 (1.77) & 0.64 (1.50) & 0.42 (1.31) \\ 
  &FP & 3.71 (4.23) & 2.97 (4.16) & 1.49 (2.50) & 1.03 (1.85) & 0.98 (2.46) & 0.59 (1.34) & 1.13 (3.10) & 0.54 (1.25) \\ 
 & F1 & 0.89 (0.08) & 0.91 (0.08) & 0.95 (0.05) & 0.96 (0.05) & 0.97 (0.04) & 0.98 (0.03) & 0.98 (0.05) & 0.99 (0.03) \\ 
   \hline
\end{tabular}
}
\end{table}

\begin{table}[H]
\centering
\caption{Mean and standard deviation (in parentheses) for sparsity identification when $p= 120$}
\label{tab:ex2_120}
{\scriptsize
\begin{tabular}{ll|ll|ll|ll|ll}
  \hline
& & \multicolumn{2}{c|}{$n=50$} & \multicolumn{2}{c|}{$n=100$} & \multicolumn{2}{c|}{$n=200$} & \multicolumn{2}{c}{$n=500$} \\ 
  \hline
& & sparse & sparsegroup & sparse & sparsegroup & sparse & sparsegroup & sparse & sparsegroup \\ 
\hline 
\multirow{3}{*}{loadings 1} & FN & 7.27 (7.26) & 6.08 (6.97) & 2.85 (4.51) & 2.48 (4.38) & 1.16 (3.26) & 1.02 (3.16) & 0.35 (2.16) & 0.34 (2.12) \\ 
 & FP & 16.4 (14.0) & 10.7 (13.8) & 10.6 (11.5) & 4.44 (7.48) & 4.87 (7.66) & 2.26 (5.22) & 1.74 (4.13) & 1.24 (3.67) \\ 
 & F1 & 0.78 (0.17) & 0.84 (0.18) & 0.87 (0.12) & 0.93 (0.11) & 0.94 (0.09) & 0.97 (0.08) & 0.98 (0.06) & 0.98 (0.05) \\ 
 \hline
 \multirow{3}{*}{loadings 2} &  FN & 10.2 (9.24) & 9.86 (10.3) & 3.97 (4.86) & 3.64 (5.52) & 1.78 (4.24) & 1.62 (4.45) & 0.62 (3.27) & 0.62 (3.28) \\ 
 & FP & 18.3 (15.6) & 15.0 (14.0) & 7.33 (10.4) & 5.30 (6.85) & 3.05 (5.87) & 2.47 (4.38) & 1.04 (3.42) & 1.33 (3.63) \\ 
 & F1 & 0.78 (0.17) & 0.80 (0.17) & 0.91 (0.10) & 0.93 (0.09) & 0.96 (0.08) & 0.97 (0.07) & 0.99 (0.05) & 0.98 (0.06) \\ 
 \hline
 \multirow{3}{*}{loadings 1} & FN & 11.5 (6.36) & 9.26 (7.45) & 5.37 (2.79) & 4.29 (2.56) & 2.63 (2.01) & 2.07 (1.73) & 0.72 (1.02) & 0.62 (0.94) \\ 
 & FP & 6.87 (8.58) & 5.96 (8.25) & 1.86 (3.43) & 1.42 (2.45) & 0.59 (1.74) & 0.71 (1.41) & 0.40 (2.01) & 0.70 (1.23) \\ 
 & F1 & 0.89 (0.07) & 0.91 (0.09) & 0.96 (0.03) & 0.97 (0.02) & 0.98 (0.02) & 0.98 (0.01) & 0.99 (0.01) & 0.99 (0.01) \\ 
   \hline
\end{tabular}
}
\end{table}

\begin{table}[H]
\centering
\caption{Mean and standard deviation (in parentheses) for sparsity identification when $p= 200$}
\label{tab:ex2_200}
{\scriptsize
\begin{tabular}{ll|ll|ll|ll|ll}
  \hline
& & \multicolumn{2}{c|}{$n=50$} & \multicolumn{2}{c|}{$n=100$} & \multicolumn{2}{c|}{$n=200$} & \multicolumn{2}{c}{$n=500$} \\ 
  \hline
& & sparse & sparsegroup & sparse & sparsegroup & sparse & sparsegroup & sparse & sparsegroup \\ 
\hline  
\multirow{3}{*}{loadings 1} & FN & 11.9 (11.5) & 9.85 (11.0) & 4.16 (6.00) & 3.51 (5.79) & 1.64 (3.68) & 1.42 (3.61) & 0.34 (2.21) & 0.31 (2.13) \\ 
  &FP & 27.6 (22.0) & 17.0 (21.5) & 17.0 (19.1) & 6.85 (12.7) & 7.81 (13.0) & 3.23 (6.75) & 2.86 (5.66) & 1.69 (4.09) \\ 
  &F1 & 0.78 (0.16) & 0.84 (0.17) & 0.88 (0.11) & 0.94 (0.09) & 0.95 (0.08) & 0.97 (0.06) & 0.98 (0.04) & 0.99 (0.03) \\ 
  \hline
  \multirow{3}{*}{loadings 2} & FN & 16.6 (14.3) & 15.8 (16.8) & 5.72 (6.09) & 5.22 (7.72) & 2.55 (4.9) & 2.16 (5.15) & 0.61 (3.37) & 0.59 (3.4) \\ 
  &FP & 29.5 (27.1) & 24.0 (24.6) & 12.3 (18.2) & 8.74 (12.0) & 3.52 (7.45) & 3.08 (6.54) & 1.45 (4.72) & 1.40 (3.87) \\ 
  &F1& 0.79 (0.16) & 0.81 (0.17) & 0.92 (0.09) & 0.93 (0.08) & 0.97 (0.05) & 0.97 (0.05) & 0.99 (0.04) & 0.99 (0.03) \\ 
  \hline
    \multirow{3}{*}{loadings 3} &  FN & 18.9 (9.90) & 15.9 (14.1) & 9.06 (4.61) & 7.38 (4.34) & 4.39 (2.66) & 3.55 (2.41) & 0.98 (1.16) & 0.87 (1.11) \\ 
 & FP & 10.4 (13.9) & 8.87 (13.6) & 3.11 (6.16) & 2.47 (4.96) & 0.82 (2.25) & 1.15 (1.93) & 0.13 (0.53) & 1.10 (1.80) \\ 
 & F1 & 0.90 (0.07) & 0.91 (0.09) & 0.96 (0.03) & 0.96 (0.03) & 0.98 (0.01) & 0.98 (0.01) & 1.00 (0.00) & 0.99 (0.01) \\ 
   \hline
\end{tabular}
}
\end{table}

\section{Real data analysis}
\label{sec:example}
We apply the proposed method to the Stock-Watson dataset, which contains 132 time series from January 1959 to December 2003 \citep{stock2005,mccracken2016fred}. This dataset was first analyzed in \cite{stock2005} and the 132 series were divided into 14 categories: real output and income (17); employment and hours (30); real retail sales (1); manufacturing and trade sales (1); consumption (1); housing starts and sales (10); real inventories (3); orders (7); stock prices (4); exchange rates (5); interest rates and spreads (17); money and credit quantity aggregates (11); price indexes (21); average hourly earnings (3); and miscellaneous (1). The detailed variables and categories are included in Appendix A. We follow the empirical analysis in \cite{stock2005}, which suggests using 7 factors with $r=7$, and we use the 14 categories to group 132 time series with ${\cal G}_1=\cdots={\cal G}_{7}$ and $J_1=J_2=\cdots=J_7=14$. 

To make the series stationary, we follow the approach in \cite{stock2005} and transformed them by taking logarithms and/or differencing. We adjust the dataset by replacing the observations whose absolute median deviations greater than 6 times the interquartile range with the median of the preceding five observations. When estimating the factor loadings, we use the outlier-adjusted data; for the out-of-sample forecasting, we use the unadjusted/original data. Details about the transformation and outlier adjustments can be found in \cite{stock2005}.

Figure \ref{fig::zeros} shows the estimated loadings of the seven factors obtained using the method of \cite{liu::wang2025} (``sparse") and the proposed method (``sparse group") with positive loadings in red, negative loadings in purple, and zero loadings in white. Variables from different categories are separated by black lines. The method of \cite{liu::wang2025} yields 548 zero loadings, but the proposed method has 594 zero loadings with 51 group zeros and 299 individual zeros, which is much sparser. The estimated loading matrix is in Appendix A. We compute the out-of-sample prediction errors for these two methods in Table \ref{tab:prediction} and show that the estimate by the proposed method can capture the dynamics and patterns of the data very well, although it is sparser.

\begin{figure}[H]
\centering
\includegraphics[scale=0.8]{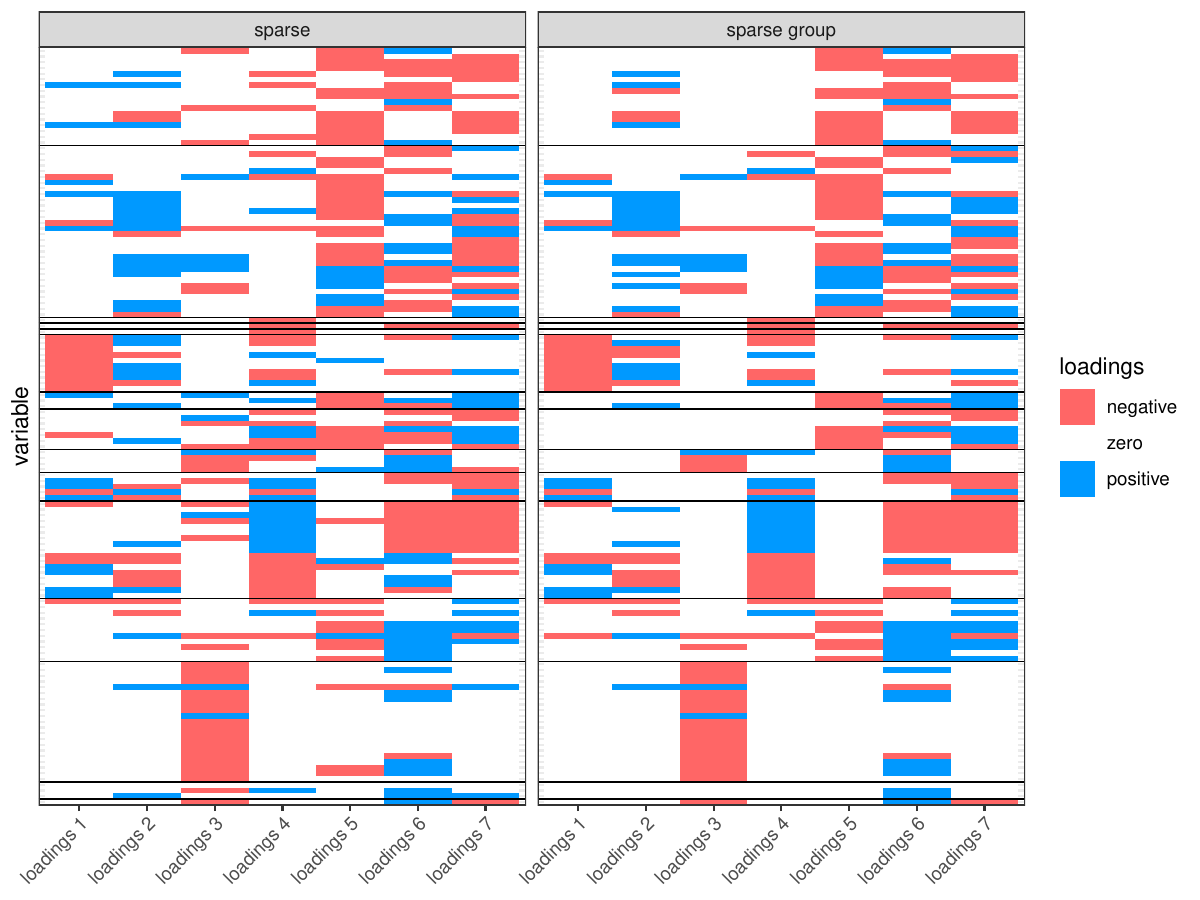}
\caption{Estimated loadings of seven factors. Left panel: Estimated sparse loadings; Right panel: Estimated sparse group loadings.}
\label{fig::zeros}
\end{figure}

We conduct one-step ahead predictions of $\bx_{t^{\prime}}$ for each $t^{\prime}=n-\tilde{n}+1, \ldots, n$. Specifically, to predict $\bx_{t^{\prime}}$, we fit the data $\bx_1$,\dots, $\bx_{t^\prime-1}$ using the sparse-group factor model to obtain the estimated loading matrix $\hat{\bQ}$ and factor processes $\hat{\bff}_t$ when $r = 7$. We apply a VAR(1) model to predict $\bff_{t^{\prime}}$, denoted as $\hat{\bff}_{t^\prime}$, and compute the prediction $\hat{\bx}_{t^{\prime}}  = \hat{\bQ}\hat{\bff}_{t^\prime}$. We repeat this rolling prediction for $t^{\prime} = n - \tilde{n} +1, \dots, n$, where $\tilde{n} = 100$. For each variable $x_{t^\prime,j}$, $j=1,\dots,p$, we compute the root mean squared error (RMSE), $\text{RMSE} = \sqrt{\tilde{n}^{-1}\sum_{t^\prime = n-\tilde{n}+1}^{n} (\hat{x}_{t^\prime,j} - x_{t^\prime,j})^2}$, and mean absolute error (MAE), $\text{MAE} =\tilde{n}^{-1}\sum_{t^\prime = n-\tilde{n}+1}^{n} \vert \hat{x}_{t^\prime,j} - x_{t^\prime,j}\vert$. Table \ref{tab:prediction} reports the average and median prediction errors across the 132 variables. Both the ``sparse" and the``sparsegroup" outperform the ``eigen" approach. The MAE of the proposed method is lower than that of the ``sparse"; while the RMSE of the ``sparsegroup" is slightly worse. These results indicate that the estimate obtained by the proposed method is sparse but can still capture the dynamics of the 132 series very well.

\begin{table}[H]
\centering
\caption{Prediction performance}
\label{tab:prediction}
\begin{tabular}{lrrrr}
  \hline
  & \multicolumn{2}{c}{RMSE} & \multicolumn{2}{c}{MAE}\\
  \hline
 method & average & median & average & median \\ 
  \hline
eigen  & 0.909 & 0.794 & 0.639 & 0.599 \\ 
 sparse & 0.906 & 0.790 & 0.637 & 0.600 \\ 
 sparsegroup &  0.906 & 0.791 & 0.636 & 0.597 \\ 
   \hline
\end{tabular}
\end{table}


Table \ref{tbl::emp_est} reports the estimated loadings of the seven factors obtained using the proposed method. Based on our estimation results, the first factor loads heavily on housing, number of employees in different sectors, and monetary market indicators (exchange rates, short-term interest rate spreads, and money supply). The second factor mainly loads on industrial production (employment, inventory, and housing starts and sales especially in Northeast and South, commodity prices index) and financial indicators (long-term interests rates and spread, money and credit quantity aggregates). The third factor loads on employment, consumer expect, prices (stock prices and price indexes), and money and credit quantity aggregates. The fourth factor is mainly associated with number of employees, housing starts and sales (U.S. West excluded), consumption and sales (retail and manufacturing and trade sales), and financial indicators (stocks' dividend yield, exchange rates, interest rates and spreads), and money and credit quantity aggregates. The fifth factor loads heavily on output, employment, inventories, orders, and money and credit quantity aggregates. The sixth factor is highly associated with output, employment (number of employees, hours, hourly earnings), manufacturing and trade sales, housing in Midwest, inventories, orders, consumer expect, and financial and monetary market indicators (stock prices, exchange rates, interest rates and spreads, money and credit quantity aggregates, and price indexes). The seventh factor loads on output, employment, manufacturing and trade sales, housing starts and sales mainly in Midwest, inventory, orders, consumer expect, and monetary market indicators (exchange rates, interest rates and one-year interest rate spreads, money and credit quantity aggregates).


\section{Conclusion}
\label{sec:conclusion}

In this paper, we extend the sparse factor model of \cite{liu::wang2025} by including both individual sparsity and group sparsity in the loading matrix. We formulate an optimization problem with penalty functions and develop an ADMM-based algorithm to estimate the loading matrix. The proposed method can incorporate the prior knowledge to obtain a more sparse and interpretable estimate. We use a simulation study to illustrate the superiority of the sparse-group estimator compared to the sparse estimator when the group information is known. We also compare different estimators using the Stock-Watson dataset with 14 predefined categories, and show the advantages of the sparse-group estimator. 


\appendix
\section{Estimation of Loading Matrix}

   \scriptsize
   \begin{center}
    \begin{longtable}{|l|l|l|r|r|r|r|r|r|r|}   
          \caption{Estimated Loading Matrix for Section \ref{sec:example}}  \label{tbl::emp_est}\\
    \hline 
        Group & name & short description &factor 1 &factor 2 &factor 3 &factor 4 &factor 5 &factor 6 &factor 7 \\ \hline 
        1 & A0M051 & PI less transfers & 0 & 0 & 0 & 0 & -0.131 & 0.019 & 0 \\ \hline
        1 & A0m082 & Cap util & 0 & 0 & 0 & 0 & -0.062 & 0 & -0.161 \\ \hline
        1 & IPS10 & IP: total & 0 & 0 & 0 & 0 & -0.108 & -0.006 & -0.097 \\ \hline
        1 & IPS11 & IP: products & 0 & 0 & 0 & 0 & -0.061 & -0.018 & -0.069 \\ \hline
        1 & IPS12 & IP: cons gds & 0 & 0.010 & 0 & 0 & 0 & -0.008 & -0.107 \\ \hline
        1 & IPS13 & IP: cons dble & 0 & 0 & 0 & 0 & 0 & 0 & -0.175 \\ \hline
        1 & IPS18 & iIP:cons nondble & 0 & 0.022 & 0 & 0 & 0 & -0.011 & 0 \\ \hline
        1 & IPS25 & IP:bus eqpt & 0 & -0.002 & 0 & 0 & -0.116 & -0.055 & 0 \\ \hline
        1 & IPS299 & IP: final prod & 0 & 0 & 0 & 0 & -0.040 & -0.028 & -0.054 \\ \hline
        1 & IPS306 & IP: fuels & 0 & 0 & 0 & 0 & 0 & 0.022 & 0 \\ \hline
        1 & IPS307 & IP: res util & 0 & 0 & 0 & 0 & 0 & -0.024 & 0 \\ \hline
        1 & IPS32 & IP: matls & 0 & -0.015 & 0 & 0 & -0.121 & 0 & -0.100 \\ \hline
        1 & IPS34 & IP: dble mats & 0 & -0.050 & 0 & 0 & -0.133 & 0 & -0.078 \\ \hline
        1 & IPS38 & IP:nondble mats & 0 & 0.028 & 0 & 0 & -0.046 & 0 & -0.177 \\ \hline
        1 & IPS43 & IP: mfg & 0 & 0 & 0 & 0 & -0.106 & 0 & -0.117 \\ \hline
        1 & PMP & NAPM prodn & 0 & 0 & 0 & 0 & -0.212 & 0 & 0 \\ \hline
        1 & a0m052 & PI & 0 & 0 & 0 & 0 & -0.10 & 0.028 & 0 \\ \hline
        2 & A0M005 & UI claims & 0 & 0 & 0 & 0 & 0 & -0.006 & 0.165 \\ \hline
        2 & A0M048 & Emp-hrs nonag & 0 & 0 & 0 & 0 & 0 & -0.135 & -0.010 \\ \hline
        2 & CES002 & Emp: total & 0 & 0 & 0 & 0 & -0.186 & 0 & 0.010 \\ \hline
        2 & CES003 & Emp: gds prod & 0 & 0 & 0 & 0 & -0.186 & 0 & 0 \\ \hline
        2 & CES006 & Emp: mining & 0 & 0 & 0 & 0.054 & 0 & -0.159 & 0 \\ \hline
        2 & CES011 & Emp: const & -0.023 & 0 & 0.015 & -0.005 & -0.019 & 0 & 0 \\ \hline
        2 & CES015 & Emp: mfg & 0.002 & 0 & 0 & 0 & -0.210 & 0 & 0 \\ \hline
        2 & CES017 & Emp: dble gds & 0 & 0 & 0 & 0 & -0.195 & 0 & 0 \\ \hline
        2 & CES033 & Emp: nondbles & 0.036 & 0.033 & 0 & 0 & -0.183 & 0.044 & -0.04 \\ \hline
        2 & CES046 & Emp: services & 0 & 0.054 & 0 & 0 & -0.145 & 0 & 0.080 \\ \hline
        2 & CES048 & Emp: TTU & 0 & 0.008 & 0 & 0 & -0.157 & 0 & 0.010 \\ \hline
        2 & CES049 & Emp: wholesale & 0 & 0.051 & 0 & 0 & -0.161 & 0 & 0.073 \\ \hline
        2 & CES053 & Emp: retail & 0 & 0.078 & 0 & 0 & -0.093 & 0.036 & 0 \\ \hline
        2 & CES088 & Emp: FIRE & -0.031 & 0.224 & 0 & 0 & 0 & 0.062 & -0.056 \\ \hline
        2 & CES140 & Emp: Govt & 0.008 & 0.148 & -0.036 & -0.098 & 0 & 0 & 0.105 \\ \hline
        2 & CES151 & Avg hrs & 0 & -0.367 & 0 & 0 & -0.303 & 0 & 0.408 \\ \hline
        2 & CES155 & Overtime: mfg & 0 & 0 & 0 & 0 & 0 & 0 & -0.127 \\ \hline
        2 & LHEL & Help wanted indx & 0 & 0 & 0 & 0 & -0.099 & 0.129 & -0.010 \\ \hline
        2 & LHELX & Help wanted/emp & 0 & 0 & 0 & 0 & -0.126 & 0.141 & 0 \\ \hline
        2 & LHEM & Emp CPS total & 0 & 0.044 & 0.03 & 0 & -0.051 & 0 & -0.004 \\ \hline
        2 & LHNAG & Emp CPS nonag & 0 & 0.057 & 0.047 & 0 & -0.079 & 0.033 & -0.008 \\ \hline
        2 & LHU14 & U 5-14 wks & 0 & 0 & 0.034 & 0 & 0.061 & -0.033 & 0.055 \\ \hline
        2 & LHU15 & U 15+ wks & 0 & 0.009 & 0 & 0 & 0.176 & -0.087 & -0.116 \\ \hline
        2 & LHU26 & U 15-26 wks & 0 & 0 & 0 & 0 & 0.119 & -0.087 & 0 \\ \hline
        2 & LHU27 & U 27+ wks & 0 & 0.011 & -0.036 & 0 & 0.131 & 0 & -0.135 \\ \hline
        2 & LHU5 & U < 5 wks & 0 & 0 & -0.033 & 0 & 0 & -0.067 & 0.024 \\ \hline
        2 & LHU680 & U: mean duration & 0 & 0 & 0 & 0 & 0.092 & 0 & -0.135 \\ \hline
        2 & LHUR & U: all & 0 & 0 & 0 & 0 & 0.141 & -0.114 & 0 \\ \hline
        2 & PMEMP & NAPM empl & 0 & 0.101 & 0 & 0 & -0.225 & -0.020 & 0.129 \\ \hline
        2 & aom001 & Avg hrs: mfg & 0 & -0.361 & 0 & 0 & -0.303 & 0 & 0.416 \\ \hline
        3 & A0M059 & Retail sales & 0 & 0 & 0 & -0.025 & 0 & 0 & 0 \\ \hline
        4 & A0M057 & M\&T sales & 0 & 0 & 0 & -0.011 & 0 & -0.009 & -0.040 \\ \hline
        5 & A0M224\_R & Consumption & 0 & 0 & 0 & -0.067 & 0 & 0 & 0 \\ \hline
        6 & HSBMW & BP: MW & -0.236 & 0 & 0 & -0.160 & 0 & -0.119 & 0.134 \\ \hline
        6 & HSBNE & BP: NE & -0.131 & 0.387 & 0 & -0.173 & 0 & 0 & 0 \\ \hline
        6 & HSBR & BP: total & -0.397 & -0.014 & 0 & 0 & 0 & 0 & 0 \\ \hline
        6 & HSBSOU & BP: South & -0.383 & -0.246 & 0 & 0.160 & 0 & 0 & 0 \\ \hline
        6 & HSBWST & BP: West & -0.379 & 0 & 0 & 0 & 0 & 0 & 0 \\ \hline
        6 & HSFR & HStarts: Total & -0.354 & 0.080 & 0 & 0 & 0 & 0 & 0 \\ \hline
        6 & HSMW & HStarts: MW & -0.182 & 0.126 & 0 & -0.171 & 0 & -0.143 & 0.108 \\ \hline
        6 & HSNE & HStarts: NE & -0.072 & 0.415 & 0 & -0.151 & 0 & 0 & 0 \\ \hline
        6 & HSSOU & HStarts: South & -0.365 & -0.045 & 0 & 0.106 & 0 & 0 & -0.005 \\ \hline
        6 & HSWST & HStarts: West & -0.359 & 0 & 0 & 0 & 0 & 0 & 0 \\ \hline
        7 & A0M070 & M\&T invent & 0 & 0 & 0 & 0 & -0.210 & 0 & 0.259 \\ \hline
        7 & A0M077 & M\&T invent/sales & 0 & 0 & 0 & 0 & -0.039 & 0.023 & 0.149 \\ \hline
        7 & PMNV & NAPM Invent & 0 & 0.170 & 0 & 0 & -0.137 & -0.081 & 0.176 \\ \hline
        8 & A0M007 & Orders: dble gds & 0 & 0 & 0 & 0 & 0 & -0.011 & -0.048 \\ \hline
        8 & A0M008 & Orders: cons gds & 0 & 0 & 0 & 0 & 0 & 0 & -0.145 \\ \hline
        8 & A0M027 & Orders: cap gds & 0 & 0 & 0 & 0 & 0 & -0.033 & 0 \\ \hline
        8 & A1M092 & Unf orders: dble & 0 & 0 & 0 & 0 & -0.175 & 0.011 & 0.103 \\ \hline
        8 & PMDEL & NAPM vendor del & 0 & 0 & 0 & 0 & -0.186 & -0.019 & 0.243 \\ \hline
        8 & PMI & PMI & 0 & 0 & 0 & 0 & -0.231 & 0 & 0.093 \\ \hline
        8 & PMNO & NAPM new ordrs & 0 & 0 & 0 & 0 & -0.191 & 0 & -0.022 \\ \hline
        9 & FSDXP & S\&P div yield & 0 & 0 & 0.085 & 0.010 & 0 & -0.239 & 0 \\ \hline
        9 & FSPCOM & S\&P 500 & 0 & 0 & -0.101 & 0 & 0 & 0.204 & 0 \\ \hline
        9 & FSPIN & S\&P: indust & 0 & 0 & -0.097 & 0 & 0 & 0.207 & 0 \\ \hline
        9 & FSPXE & S\&P PE ratio & 0 & 0 & -0.038 & 0 & 0 & 0.189 & 0 \\ \hline
        10 & EXRCAN & EX rate: Canada & 0 & 0 & 0 & 0 & 0 & -0.046 & -0.039 \\ \hline
        10 & EXRJAN & Ex rate: Japan & 0.051 & 0 & 0 & 0.083 & 0 & -0.008 & -0.112 \\ \hline
        10 & EXRSW & Ex rate: Switz & 0.066 & 0 & 0 & 0.124 & 0 & 0 & -0.198 \\ \hline
        10 & EXRUK & Ex rate: UK & -0.057 & 0 & 0 & -0.071 & 0 & 0 & 0.131 \\ \hline
        10 & EXRUS & Ex rate: avg & 0.090 & 0 & 0 & 0.129 & 0 & 0 & -0.183 \\ \hline
        11 & CP90 & Commpaper & -0.006 & 0 & 0 & 0.051 & 0 & -0.195 & -0.149 \\ \hline
        11 & FYAAAC & Aaabond & 0 & 0.007 & 0 & 0.115 & 0 & -0.184 & -0.203 \\ \hline
        11 & FYBAAC & Baa bond & 0 & 0 & 0 & 0.128 & 0 & -0.259 & -0.185 \\ \hline
        11 & FYFF & FedFunds & 0 & 0 & 0 & 0.013 & 0 & -0.191 & -0.103 \\ \hline
        11 & FYGM3 & 3 mo T-bill & 0 & 0 & 0 & 0.093 & 0 & -0.131 & -0.182 \\ \hline
        11 & FYGM6 & 6 mo T-bill & 0 & 0 & 0 & 0.115 & 0 & -0.132 & -0.202 \\ \hline
        11 & FYGT1 & 1 yr T-bond & 0 & 0 & 0 & 0.133 & 0 & -0.164 & -0.209 \\ \hline
        11 & FYGT10 & 10 yr T-bond & 0 & 0.022 & 0 & 0.123 & 0 & -0.066 & -0.215 \\ \hline
        11 & FYGT5 & 5 yr T-bond & 0 & 0 & 0 & 0.129 & 0 & -0.111 & -0.239 \\ \hline
        11 & sFYAAAC & Aaa-FF spread & -0.069 & -0.189 & 0 & -0.268 & 0 & 0 & 0 \\ \hline
        11 & sFYBAAC & Baa-FF spread & -0.077 & -0.175 & 0 & -0.223 & 0 & 0.032 & 0 \\ \hline
        11 & sFYGM6 & 6 mo-FF spread & 0.038 & 0 & 0 & -0.332 & 0 & -0.044 & 0 \\ \hline
        11 & sFYGT1 & 1 yr-FF spread & 0.046 & -0.053 & 0 & -0.242 & 0 & -0.046 & -0.112 \\ \hline
        11 & sFYGT10 & 10yr-FF spread & 0 & -0.138 & 0 & -0.281 & 0 & 0 & 0 \\ \hline
        11 & sFYGT5 & 5 yr-FFspread & 0 & -0.123 & 0 & -0.283 & 0 & 0 & 0 \\ \hline
        11 & scp90 & CP-FF spread & 0.112 & 0.080 & 0 & -0.330 & 0 & -0.156 & 0 \\ \hline
        11 & sfygm3 & 3 mo-FF spread & 0.006 & 0 & 0 & -0.326 & 0 & -0.010 & 0 \\ \hline
        12 & A0M095 & Inst cred/PI & -0.078 & -0.058 & 0 & -0.027 & -0.009 & 0 & 0.083 \\ \hline
        12 & CCINRV & Cons credit & 0 & 0 & 0 & 0 & 0 & 0 & 0 \\ \hline
        12 & FCLBMC & C\&I loans & 0 & -0.155 & 0 & 0.143 & -0.145 & 0 & 0.114 \\ \hline
        12 & FCLNQ & C\&I loans & 0 & 0 & 0 & 0 & 0 & 0 & 0 \\ \hline
        12 & FM1 & M1 & 0 & 0 & 0 & 0 & -0.047 & 0.248 & 0.127 \\ \hline
        12 & FM2 & M2 & 0 & 0 & 0 & 0 & -0.042 & 0.257 & 0.161 \\ \hline
        12 & FM2DQ & M2 (real) & -0.010 & 0.139 & -0.053 & -0.116 & 0 & 0.264 & -0.033 \\ \hline
        12 & FM3 & M3 & 0 & 0 & 0 & 0 & -0.054 & 0.189 & 0.092 \\ \hline
        12 & FMFBA & MB & 0 & 0 & -0.017 & 0 & -0.018 & 0.180 & 0.024 \\ \hline
        12 & FMRNBA & Reserves nonbor & 0 & 0 & 0 & 0 & 0 & 0.034 & 0 \\ \hline
        12 & FMRRA & Reserves tot & 0 & 0 & 0 & 0 & -0.027 & 0.206 & 0.025 \\ \hline
        13 & GMDC & PCE defl & 0 & 0 & -0.268 & 0 & 0 & 0 & 0 \\ \hline
        13 & GMDCD & PCE defl: dlbes & 0 & 0 & -0.064 & 0 & 0 & 0.042 & 0 \\ \hline
        13 & GMDCN & PCE defl: nondble & 0 & 0 & -0.309 & 0 & 0 & 0 & 0 \\ \hline
        13 & GMDCS & PCE defl: services & 0 & 0 & -0.125 & 0 & 0 & 0 & 0 \\ \hline
        13 & PMCP & NAPM com price & 0 & 0.191 & 0.027 & 0 & 0 & -0.180 & 0 \\ \hline
        13 & PSCCOM & Commod: spot price & 0 & 0 & -0.038 & 0 & 0 & 0.068 & 0 \\ \hline
        13 & PSM99Q & Sens mat’ls price & 0 & 0 & -0.021 & 0 & 0 & 0.138 & 0 \\ \hline
        13 & PU83 & CPI-U: apparel & 0 & 0 & -0.048 & 0 & 0 & 0 & 0 \\ \hline
        13 & PU84 & CPI-U: transp & 0 & 0 & -0.208 & 0 & 0 & 0 & 0 \\ \hline
        13 & PU85 & CPI-U: medical & 0 & 0 & 0.073 & 0 & 0 & 0 & 0 \\ \hline
        13 & PUC & CPI-U: comm. & 0 & 0 & -0.366 & 0 & 0 & 0 & 0 \\ \hline
        13 & PUCD & CPI-U: dbles & 0 & 0 & -0.103 & 0 & 0 & 0 & 0 \\ \hline
        13 & PUNEW & CPI-U: all & 0 & 0 & -0.342 & 0 & 0 & 0 & 0 \\ \hline
        13 & PUS & CPI-U: services & 0 & 0 & -0.077 & 0 & 0 & 0 & 0 \\ \hline
        13 & PUXF & CPI-U: ex food & 0 & 0 & -0.263 & 0 & 0 & 0 & 0 \\ \hline
        13 & PUXHS & CPI-U: ex shelter & 0 & 0 & -0.328 & 0 & 0 & 0 & 0 \\ \hline
        13 & PUXM & CPI-U: ex med & 0 & 0 & -0.352 & 0 & 0 & -0.008 & 0 \\ \hline
        13 & PWCMSA & PPI: crude mat’ls & 0 & 0 & -0.173 & 0 & 0 & 0.007 & 0 \\ \hline
        13 & PWFCSA & PPI: cons gds & 0 & 0 & -0.216 & 0 & 0 & 0.107 & 0 \\ \hline
        13 & PWFSA & PPI: fin gds & 0 & 0 & -0.188 & 0 & 0 & 0.111 & 0 \\ \hline
        13 & PWIMSA & PPI: int mat’ls & 0 & 0 & -0.167 & 0 & 0 & 0 & 0 \\ \hline
        14 & CES275 & AHE: goods & 0 & 0 & 0 & 0 & 0 & 0 & 0 \\ \hline
        14 & CES277 & AHE: const & 0 & 0 & 0 & 0 & 0 & 0.057 & 0 \\ \hline
        14 & CES278 & AHE: mfg & 0 & 0 & 0 & 0 & 0 & 0.007 & 0 \\ \hline
    \end{longtable}
\end{center}
\vspace{-1.5cm}
\normalsize
The variable description can be found in Appendix A in \cite{stock2005}.

\section{Proof of Theorem 1}

\begin{proof}

We will prove the results in two parts. In part 1, we prove the result for $\hat{\bf q}_1$. 

\textbf{Part 1}: $\Vert  \hat{\bf q}_1 - {\bf q}_1 \Vert_2 = O_p(\tau_{n,p,m})$.

From Lemma \ref{lem_or}, we know that $\Vert\hat{{\bf q}}_{1}^{or}-{\bf q}_{1}\Vert_{2}=O_{p}\left(\tau_{n,p,m}\right)$. The next step is to show that $\hat{{\bf q}}_{1}^{or}$ is a local minimizer
of $$G\left({\bf q}_{1}\right)=\frac{1}{2}\Vert\hat{{\bf S}}\hat{{\bf S}}^{\top}-{\bf q}_{1}{\bf q}_{1}^{\top}\Vert_{F}^{2}+\sum_{j=1}^{p}\mathcal{P}_{\gamma}\left(\vert q_{1j}\vert,\lambda_1\right)+\sum_{j^{\prime}=1}^{J_{1}}\mathcal{P}_{\gamma}\left(\Vert{\bf q}_{1\left(j^{\prime}\right)}\Vert_2,\sqrt{d_{1j^{\prime}}}\lambda_2\right),$$
subject to $\Vert{\bf q}_{1}\Vert_2=1$.

Consider a neighbor of ${\bf q}_{1}$ such that $\Vert{\bf u}-{\bf q}_{1}\Vert_2=O_{p}\left(\tau_{n,p,m}\right)$ and $\Vert{\bf u}\Vert_2=1$.  Define ${\bf u}^{*}[\mathcal{V}_{1}]={\bf u}_{[\mathcal{V}_{1}]}$
and ${\bf u}^{*}[-\mathcal{V}_{1}]=\bzero$, and $\alpha=\Vert{\bf u}^{*}\Vert_2$.
Let $\tilde{{\bf u}}={\bf u}^{*}/\alpha$ , which indicates that $\tilde{{\bf u}}[-\mathcal{V}_{1}]={\bf 0}$ and $\Vert \tilde{\bu}\Vert_2 =1$ based on the definition of $\tilde{\bu}$.

First we will compare $G\left(\hat{{\bf q}}_{1}^{or}\right)$ and
$G\left(\tilde{{\bf u}}\right)$. We have $\Vert{\bf u}^{*}-{\bf q}_{1}\Vert_2=O_{p}\left(\tau_{n,p,m}\right)$,
$\Vert{\bf u}_{[-\mathcal{V}_{1}]}\Vert_2=\Vert{\bf u}-{\bf u}^{*}\Vert_2=O_{p}\left(\tau_{n,p,m}\right)$,
and $\alpha=\Vert{\bf u}^{*}\Vert_2\geq\Vert{\bf q}_{1}\Vert_2 - \Vert{\bf u}^{*}-{\bf q}_{1}\Vert_2=1-\Vert{\bf u}^{*}-{\bf q}_{1}\Vert_2$ with $\alpha <1$. We can have 
\begin{align*}
\tilde{{\bf u}}-{\bf q}_{1} & =\frac{{\bf u}^{*}}{\alpha}-{\bf q}_{1}=\frac{{\bf u}-\left({\bf u}-{\bf u}^{*}\right)}{\alpha}-{\bf q}_{1} =\frac{{\bf u-{\bf q}_{1}-\left({\bf u}-{\bf u}^{*}\right)}}{\alpha}+\left(\frac{1}{\alpha}-1\right){\bf q}_{1}.
\end{align*}
Thus 
\begin{align*}
\Vert\tilde{{\bf u}}-{\bf q}_{1}\Vert_2 & \leq\frac{1}{\alpha}\Vert{\bf u}-{\bf q}_{1}\Vert_2+\frac{1}{\alpha}\Vert{\bf u}-{\bf u}^{*}\Vert_2+\frac{1}{\alpha}-1\\
&= \frac{\Vert{\bf u}-{\bf q}_{1}\Vert_2+\Vert{\bf u}-{\bf u}^{*}\Vert_2+\Vert{\bf u}-{\bf q}_{1}\Vert_2}{1-\Vert \bu^* - \bq_1 \Vert_2 } = O_p(\tau_{n,p,m}).
\end{align*}

Based on the assumption about the minimal signal and the assumption about $\lambda_1$ and $\lambda_2$, for $j\in\mathcal{V}_{1}$, we have
$\vert\hat{q}_{1j}^{or}\vert\geq\vert q_{1j}\vert-\vert q_{1j}-\hat{q}_{1j}^{or}\vert>\gamma\lambda_1$
since $\vert q_{1j}-\hat{q}_{1j}^{or}\vert=O_{p}\left(\tau_{n,p,m}\right)$. Similarly we have $\vert\tilde{u}_{j}\vert>\gamma\lambda_1$. Then, $\mathcal{P}_{\gamma}\left(\vert\hat{q}_{1j}^{or}\vert,\lambda_1\right)=\mathcal{P}_{\gamma}\left(\vert\tilde{u}_{j}\vert,\lambda_1\right)=\frac{1}{2}\gamma\lambda_1^{2}$ based on the definition MCP.
So we have $\sum_{j=1}^{p}\mathcal{P}_{\gamma}\left(\vert\hat{q}_{1j}^{or}\vert,\lambda_1\right)=\sum_{j=1}^{p}\mathcal{P}_{\gamma}\left(\vert\tilde{u}_{j}\vert,\lambda_1\right)$.

Furthermore, for $j^\prime \in \mathcal{V}^g_1$, we have $\Vert \hat{{\bf q}}_{1(j^\prime)}^{or}\Vert_2 \geq \Vert {\bf q}_{1(j^\prime)}\Vert_2 - \Vert {\bf q}_{1(j^\prime)} - \hat{{\bf q}}_{1(j^\prime)}^{or} \Vert_2 > \sqrt{d_{1j^\prime}}\gamma \lambda_2$  since $\Vert {\bf q}_{1(j^\prime)} - \hat{{\bf q}}_{1(j^\prime)}^{or} \Vert_2 = O_p(\tau_{n,p,m})$. We can have $\Vert \tilde{{\bf u}}_{(j^\prime)}\Vert_2 > \sqrt{d_{1j^\prime}}\gamma \lambda_2$ since $\Vert\tilde{{\bf u}}-{\bf q}_{1}\Vert_2 = O_p(\tau_{n,p,m})$. Thus, $\sum_{j^{\prime}=1}^{J_{1}}\mathcal{P}_{\gamma}\left(\Vert{\tilde{\bf u}}_{\left(j^{\prime}\right)}\Vert,\sqrt{d_{1j^{\prime}}}\lambda_2\right) =\sum_{j^{\prime}=1}^{J_{1}}\mathcal{P}_{\gamma}\left(\Vert{\bf q}_{1\left(j^{\prime}\right)}^{or}\Vert,\sqrt{d_{1j^{\prime}}}\lambda_2\right) = \frac{1}{2}\sum_{j^\prime}^{J_1}d_{1j^\prime}\gamma \lambda_2^2$ based on the definition of MCP. 

Based on the definition of $\hat{{\bf q}}_{1}^{or}$, we have $\Vert\hat{{\bf S}}\hat{{\bf S}}^{\top}-\hat{{\bf q}}_{1}^{or}\left(\hat{{\bf q}}_{1}^{or}\right)^{\top}\Vert_{F}^{2}<\Vert\hat{{\bf S}}\hat{{\bf S}}^{\top}-\tilde{{\bf u}}\tilde{{\bf u}}^{\top}\Vert_{F}^{2}$
for $\tilde{{\bf u}}\neq\hat{{\bf q}}_{1}^{or}$. This implies that
$G\left(\hat{{\bf q}}_{1}^{or}\right)<G\left(\tilde{{\bf u}}\right)$. 

Next, we will compare $G\left(\tilde{{\bf u}}\right)$ and $G\left({\bf u}\right)$.
We have that 
\begin{align}
\label{eq_gdiff}
G\left(\tilde{{\bf u}}\right)-G\left({\bf u}\right) & =-\tilde{{\bf u}}^{\top}\hat{{\bf S}}\hat{{\bf S}}^{\top}\tilde{{\bf u}}+{\bf u}^{\top}\hat{{\bf S}}\hat{{\bf S}}^{\top}{\bf u} \nonumber\\
 & +\sum_{j=1}^{p}\mathcal{P}_{\gamma}\left(\vert\tilde{u}_{j}\vert,\lambda_{1}\right)-\sum_{j=1}^{p}\mathcal{P}_{\gamma}\left(\vert u_{j}\vert,\lambda_1\right)+ \nonumber\\
 & +\sum_{j^{\prime}=1}^{J_{1}}\mathcal{P}_{\gamma}\left(\Vert\tilde{{\bf u}}_{(j^{\prime})}\Vert_2,\sqrt{d_{1j^{\prime}}}\lambda_{2}\right)-\sum_{j^{\prime}=1}^{J_{1}}\mathcal{P}_{\gamma}\left(\Vert{\bf u}_{(j^{\prime})}\Vert_2,\sqrt{d_{1j^{\prime}}}\lambda_{2}\right)
\end{align}

Similar to the proof in \cite{liu::wang2025}, we first consider $-\tilde{{\bf u}}^{\top}\hat{{\bf S}}\hat{{\bf S}}^{\top}\tilde{{\bf u}}+{\bf u}^{\top}\hat{{\bf S}}\hat{{\bf S}}^{\top}{\bf u}$ in \eqref{eq_gdiff}.

The following part is the same as that of in \cite{liu::wang2025}.
Since $-\tilde{{\bf u}}^{\top}\hat{{\bf S}}\hat{{\bf S}}^{\top}\tilde{{\bf u}}=-\frac{1}{\alpha^{2}}{\bf u}^{*^{\top}}\hat{{\bf S}}\hat{{\bf S}}^{\top}{\bf u}^{*}\leq-{\bf u}^{*^{\top}}\hat{{\bf S}}\hat{{\bf S}}^{\top}{\bf u}^{*}$. Thus, \begin{align}
\label{eq_part1}
-\tilde{{\bf u}}^{\top}\hat{{\bf S}}\hat{{\bf S}}^{\top}\tilde{{\bf u}}+{\bf u}^{\top}\hat{{\bf S}}\hat{{\bf S}}^{\top}{\bf u} & \leq-{\bf u}^{*^{\top}}\hat{{\bf S}}\hat{{\bf S}}^{\top}{\bf u}^{*}+{\bf u}^{\top}\hat{{\bf S}}\hat{{\bf S}}^{\top}{\bf u} \nonumber \\
 & = {\bf u}^{*\top}\hat{{\bf S}}\hat{{\bf S}}^{\top}\left({\bf u}-{\bf u}^{*}\right)+\left({\bf u}-{\bf u}^{*}\right)^{\top}\hat{{\bf S}}\hat{{\bf S}}^{\top}{\bf {\bf u}} \nonumber\\
 & \leq\Vert{\bf u}^{*\top}\hat{{\bf S}}\hat{{\bf S}}^{\top}\left({\bf u}-{\bf u}^{*}\right)\Vert_2+\Vert\left({\bf u}-{\bf u}^{*}\right)^{\top}\hat{{\bf S}}\hat{{\bf S}}^{\top}{\bf u}\Vert_2.
\end{align}
Denote ${\bf E}=\hat{{\bf S}}\hat{{\bf S}}^{\top}-{\bf S}{\bf S}^{\top}$, we have
\begin{align}
\label{eq_part1_p1}
\Vert{\bf u}^{*\top}\hat{{\bf S}}\hat{{\bf S}}^{\top}\left({\bf u}-{\bf u}^{*}\right)\Vert_2 & =\Vert{\bf u}^{*\top}\left({\bf S}{\bf S}^{\top}+{\bf E}\right)\left({\bf u}-{\bf u}^{*}\right)\Vert_2 \nonumber\\
 & \leq\Vert{\bf u}^{*\top}{\bf S}{\bf S}^{\top}\left({\bf u}-{\bf u}^{*}\right)\Vert_2+\Vert{\bf u}^{*\top}{\bf E}\left({\bf u}-{\bf u}^{*}\right)\Vert_2.
\end{align}
Let ${\bf u}^{*}={\bf u}^*- {\bf q}_1 + {\bf q}_1={\bf q}_{1}+{\bf e}^*$, where  ${\bf e}^*={\bf u}-{\bf q}_{1}$, thus, the first part in \eqref{eq_part1_p1} can be bounded by 
\begin{align*}
\Vert{\bf u}^{*\top}{\bf S}{\bf S}^{\top}\left({\bf u}-{\bf u}^{*}\right)\Vert_2 & \leq\Vert{\bf q}_{1}^{\top}{\bf S}{\bf S}^{\top}\left({\bf u}-{\bf u}^{*}\right)\Vert_2+\Vert{\bf e}^{*\top}{\bf S}{\bf S}^{\top}\left({\bf u}-{\bf u}^{*}\right)\Vert_2\\
 & \leq0+O_{p}\left(\tau_{n,p,m} \right)\Vert{\bf u}-{\bf u}^{*}\Vert_2. 
\end{align*}
Furthermore $\Vert{\bf u}^{*\top}{\bf E}\left({\bf u}-{\bf u}^{*}\right)\Vert_2= \Vert{\bf u}^{*\top}{\bf E}\Vert_{\max} \sum_{j\notin \mathcal{V}_1}\vert u_j\vert$. From Lemma \ref{lemma_uE}, $\Vert{\bf u}^{*\top}{\bf E}\left({\bf u}-{\bf u}^{*}\right)\Vert_2= O_p(\tau_{n,p,m}) \sum_{j\notin \mathcal{V}_1}\vert u_j\vert$, which gives the bound of the second part in \eqref{eq_part1_p1}.
 Thus, the first part in \eqref{eq_part1} can be bounded as follows
$\Vert{\bf u}^{*\top}\hat{{\bf S}}\hat{{\bf S}}^{\top}\left({\bf u}-{\bf u}^{*}\right)\Vert_2=O_p(\tau_{n,p,m}) \sum_{j\notin \mathcal{V}_1}\vert u_j\vert$.

Similarly, ${\bf u}={\bf u}-{\bf q}_{1}+{\bf q}_{1}={\bf e}+{\bf q}_{1}$
, the second part in \eqref{eq_part1} can be bounded by 
\begin{align*}
\Vert\left({\bf u}-{\bf u}^{*}\right)^{\top}\hat{{\bf S}}\hat{{\bf S}}^{\top}{\bf u}\Vert_2 & =\Vert{\bf u}^{\top}\left({\bf S}{\bf S}^{\top}+{\bf E}\right)\left({\bf u}-{\bf u}^{*}\right)\Vert_2\\
 & \leq\Vert{\bf u}^{\top}{\bf S}{\bf S}^{\top}\left({\bf u}-{\bf u}^{*}\right)\Vert_2+\Vert{\bf u}^{\top}{\bf E}\left({\bf u}-{\bf u}^{*}\right)\Vert_2\\
 & \leq\Vert{\bf e}^{\top}{\bf S}{\bf S}^{\top}\left({\bf u}-{\bf u}^{*}\right)\Vert_2+\Vert{\bf q}_{1}^{\top}{\bf S}{\bf S}^{\top}\left({\bf u}-{\bf u}^{*}\right)\Vert_2\\
 & +\Vert{\bf e}^{\top}{\bf E}\left({\bf u}-{\bf u}^{*}\right)\Vert_2+\Vert{\bf q}_{1}^{\top}{\bf E}\left({\bf u}-{\bf u}^{*}\right)\Vert_2 \leq O_{p}\left(\tau_{n,p,m}\right)\sum_{j\notin\mathcal{V}_{1}}\vert u_{j}\vert.
\end{align*}

Thus, we have the following bound for  \eqref{eq_part1},
\begin{equation}
\label{eq_bound1}
-\tilde{{\bf u}}^{\top}\hat{{\bf S}}\hat{{\bf S}}^{\top}\tilde{{\bf u}}+{\bf u}^{\top}\hat{{\bf S}}\hat{{\bf S}}^{\top}{\bf u}\leq O_{p}\left(\tau_{n,p,m}\right)\sum_{j\notin\mathcal{V}_{1}}\vert u_{j}\vert.
\end{equation}

Next, consider $\sum_{j=1}^{p}\mathcal{P}_{\gamma}\left(\vert\tilde{u}_{j}\vert,\lambda_{1}\right)-\sum_{j=1}^{p}\mathcal{P}_{\gamma}\left(\vert u_{j}\vert,\lambda_1\right)$ in  \eqref{eq_gdiff}. 
Since $\Vert{\bf u}-{\bf q}_{1}\Vert_2=O_{p}\left(\tau_{n,p,m}\right)$
and $\Vert\tilde{{\bf u}}-{\bf q}_{1}\Vert_2=O_{p}\left(\tau_{n,p,m}\right)$
and $\lambda_1/\tau_{n,p,m} \to \infty$ as assumed, thus, $\vert\tilde{u}_{j}\vert>\gamma\lambda_1$
and $\vert u_{j}\vert>\gamma\lambda_1$ for $j\in\mathcal{V}_{1}$. Thus we have, 
\[
\sum_{j=1}^{p}\mathcal{P}_{\gamma}\left(\vert\tilde{u}_{j}\vert,\lambda_1\right)-\sum_{j=1}^{p}\mathcal{P}_{\gamma}\left(\vert u_{j}\vert,\lambda_1\right)=-\sum_{j\notin\mathcal{V}_{1}}\mathcal{P}_{\gamma}\left(\vert u_{j}\vert,\lambda_1\right)=-\sum_{j\notin\mathcal{V}_{1}}\left(\lambda_1\vert u_{j}\vert-\frac{\vert u_{j}\vert^{2}}{2\gamma}\right).
\]

Furthermore, since $\lambda_{2}/\tau_{n,p,m}\rightarrow\infty$, and $\Vert{\bf u}_{\left(j^{\prime}\right)}\Vert=O_{p}\left(\tau_{n,p,m}\right)$,
we have \begin{align*}
 & \sum_{j^{\prime}=1}^{J_{1}}\mathcal{P}_{\gamma}\left(\Vert\tilde{{\bf u}}_{(j^{\prime})}\Vert_2,\sqrt{d_{1j^{\prime}}}\lambda_{2}\right)-\sum_{j^{\prime}=1}^{J_{1}}\mathcal{P}_{\gamma}\left(\Vert{\bf u}_{(j^{\prime})}\Vert_2,\sqrt{d_{1j^{\prime}}}\lambda_{2}\right)\\
= & -\sum_{j^{\prime}\notin\mathcal{V}_{1}^{g}}\mathcal{P}_{\gamma}\left(\Vert{\bf u}_{(j^{\prime})}\Vert_2,\sqrt{d_{1j^{\prime}}}\lambda_{2}\right)=-\sum_{j^{\prime}\notin\mathcal{V}_{1}^{g}}\left(\sqrt{d_{1j^{\prime}}}\lambda_{2}\Vert{\bf u}_{(j^{\prime})}\Vert_2-\frac{\Vert{\bf u}_{(j^{\prime})}\Vert_2^{2}}{2\gamma}\right)\\
= & -\sum_{j^{\prime}\notin\mathcal{V}_{1}^{g}}\left(\sqrt{d_{1j^{\prime}}}\lambda_{2}-\frac{\Vert{\bf u}_{(j^{\prime})}\Vert_2}{2\gamma}\right)\Vert{\bf u}_{(j^{\prime})}\Vert_2.
\end{align*}

Thus, \eqref{eq_gdiff} is bounded by
\begin{align*}
 & G\left(\tilde{{\bf u}}\right)-G\left({\bf u}\right)\\
\leq & -\sum_{j\notin\mathcal{V}_{1}}\left(\lambda_{1}-\frac{\vert u_{j}\vert}{2\gamma}-O_{p}\left(\tau_{n,p,m}\right)\right)\vert u_{j}\vert-\sum_{j^{\prime}\notin\mathcal{V}_{1}^{g}}\left(\sqrt{d_{1j^{\prime}}}\lambda_{2}-\frac{\Vert{\bf u}_{(j^{\prime})}\Vert_2}{2\gamma}\right)\Vert{\bf u}_{(j^{\prime})}\Vert_2.
\end{align*}
As $\vert u_{j}\vert=O_{p}\left(\tau_{n,p,m}\right)$ for
$j\notin\mathcal{\mathcal{V}}_{1}$, thus $\lambda_1 \gtrsim \vert u_{j}\vert$ and $\sqrt{d_{1j^{\prime}}}\lambda_{2} \gtrsim  \Vert{\bf u}_{(j^{\prime})}\Vert_2$. 
This implies that $G\left(\tilde{{\bf u}}\right)-G\left({\bf u}\right)<0$ for $\tilde{\bf u} \neq {\bf u}$.
Thus, we have shown that $G\left(\hat{{\bf q}}_{1}^{or}\right)<G\left(\tilde{{\bf u}}\right)<G\left({\bf u}\right)$ for $\bu \neq \hat{\bq}_1^{or}$,
which implies that $\hat{{\bf q}}_{1}^{or}$ is a local minimizer
of the objective function. This completes the proof.

\textbf{Part 2}: $\Vert  \hat{\bf q}_i - {\bf q}_i \Vert_2 = O_p(\tau_{n,p,m})$ for $i\geq 2$.

From Lemma \ref{lem_or}, we know that $\Vert  \hat{\bf q}^{or}_i - {\bf q}_i \Vert_2 = O_p(\tau_{n,p,m})$.

Then, we will show that  $\hat{{\bf q}}_{i}^{or}$ is a local minimizer
of $G\left({\bf q}_{i}\right)$ with the following form,
\begin{align*}
G\left({\bf q}_{i}\right)=\frac{1}{2}\Vert\widehat{{\bf S}}\widehat{{\bf S}}^{\top}-{\bf s}_{i}{\bf s}_{i}^{\top}\Vert_{F}^{2}+\sum_{j=1}^{p}\mathcal{P}_{\gamma}\left(\vert q_{ij}\vert,\lambda_{1}\right)+\sum_{j^{\prime}=1}^{J_{i}}\mathcal{P}_{\gamma}\left(\Vert{\bf q}_{i\left(j^{\prime}\right)}\Vert_2,\sqrt{d_{ij^{\prime}}}\lambda_{2}\right)
\\
\text{subject to }{\bf s}_{i}=\left({\bf I}-\tilde{{\bf S}}_{i}\tilde{{\bf S}}_{i}^{\top}\right){\bf q}_{i}\text{ and }\Vert{\bf s}_{i}\Vert_2=1.
\end{align*}

Consider a neighbor of ${\bf q}_{i}$ such that $\Vert{\bf u}-{\bf q}_{i}\Vert_2=O_{p}\left(\tau_{n,p,m}\right)$, $\Vert{\bf u}-\hat{{\bf q}}_{i}^{or}\Vert_2 \leq \delta_n$, where $\delta_n = o(1)$,
and satisfies $\Vert\left({\bf I}-\tilde{{\bf S}}_{i}\tilde{{\bf S}}_{i}^{\top}\right){\bf u}\Vert_2=1$.
Define ${\bf u}^{*}[\mathcal{V}_{i}]={\bf u}_{[\mathcal{V}_{i}]}$
and ${\bf u}^{*}[-\mathcal{V}_{i}]=\bf{0}$, and $\alpha=\Vert\left({\bf I}-\tilde{{\bf S}}_{i}\tilde{{\bf S}}_{i}^{\top}\right){\bf u}^{*}\Vert_2$. Denote
$\tilde{{\bf u}}={\bf u}^{*}/\alpha$ , which indicates that $\tilde{{\bf u}}[-\mathcal{V}_{i}]={\bf 0}$ and $\Vert\left({\bf I}-\tilde{{\bf S}}_{i}\tilde{{\bf S}}_{i}^{\top}\right)\tilde{{\bf u}}\Vert_2=1$ based on the definition of $\tilde{\bu}$.

First we will compare $G\left(\hat{{\bf q}}_{i}^{or}\right)$ and
$G\left(\tilde{{\bf u}}\right)$. The proof is almost the same as that in the proof of Theorem 2 in \cite{liu::wang2025}.

From the definitions, we have $\Vert{\bf u}^{*}-{\bf q}_{i}\Vert_2=O_{p}\left(\tau_{n,p,m}\right)$,
$\Vert{\bf u}_{[-\mathcal{V}_{i}]}\Vert_2=\Vert{\bf u}-{\bf u}^{*}\Vert_2=O_{p}\left(\tau_{n,p,m}\right)$,
and 
$
\alpha=\Vert\left({\bf I}-\tilde{{\bf S}}_{i}\tilde{{\bf S}}_{i}^{\top}\right){\bf u}^{*}\Vert_2=\Vert\left({\bf I}-\tilde{{\bf S}}_{i}\tilde{{\bf S}}_{i}^{\top}\right){\bf u}+\left({\bf I}-\tilde{{\bf S}}_{i}\tilde{{\bf S}}_{i}^{\top}\right){\bf {\bf u}}_{[-\mathcal{V}_{i}]}\Vert_2 \geq 1-\Vert\left({\bf I}-\tilde{{\bf S}}_{i}\tilde{{\bf S}}_{i}^{\top}\right){\bf {\bf u}}_{[-\mathcal{V}_{i}]}\Vert_2
$ with $\alpha\leq1$.  We have 
\begin{align*}
\tilde{{\bf u}}-{\bf q}_{i} & =\frac{{\bf u}^{*}}{\alpha}-{\bf q}_{i}=\frac{{\bf u}-\left({\bf u}-{\bf u}^{*}\right)}{\alpha}-{\bf q}_{i}=\frac{{\bf u}-{\bf q}_{i}+\left({\bf u}-{\bf u}^{*}\right)}{\alpha}+\left(\frac{1}{\alpha}-1\right){\bf q}_{i}.
\end{align*}
Thus, 
\begin{align*}
\Vert\tilde{{\bf u}}-{\bf q}_{i}\Vert_2 & \leq\frac{1}{\alpha}\Vert{\bf u}-{\bf q}_{i}\Vert_2+\frac{1}{\alpha}\Vert{\bf u}-{\bf u}^{*}\Vert_2+\frac{1}{\alpha}-1\\
 & \leq\frac{\Vert{\bf u}-{\bf q}_{i}\Vert_2+\Vert{\bf u}-{\bf u}^{*}\Vert_2+\Vert\left({\bf I}-\tilde{{\bf S}}_{i}\tilde{{\bf S}}_{i}^{\top}\right){\bf {\bf u}}_{[-\mathcal{V}_{i}]}\Vert_2}{1-\Vert\left({\bf I}-\tilde{{\bf S}}_{i}\tilde{{\bf S}}_{i}^{\top}\right){\bf {\bf u}}_{[-\mathcal{V}_{i}]}\Vert_2} =O_{p}\left(\tau_{n,p,m}\right).
\end{align*}

Based on the assumption about the minimal signal and the assumption about $\lambda_1$, for $j\in\mathcal{V}_{i}$, we have
$\vert\hat{q}_{ij}^{or}\vert\geq\vert q_{ij}\vert- \vert q_{ij} -\hat{q}_{ij}^{or}\vert >\gamma\lambda_1$,   since $\vert q_{ij} -\hat{q}_{ij}^{or}\vert = O_p(\tau_{n,p,m})$. Similarly $\vert\tilde{u}_{j}\vert>\gamma\lambda_1$ for $j\in \mathcal{V}_i$.
Then $\mathcal{P}_{\gamma}\left(\vert\hat{q}_{ij}\vert,\lambda_1\right)=\mathcal{P}_{\gamma}\left(\vert\tilde{u}_{j}\vert,\lambda_1\right)=\frac{1}{2}\gamma\lambda_1^{2}$.
So we have $\sum_{j=1}^{p}\mathcal{P}_{\gamma}\left(\vert\hat{q}_{ij}^{or}\vert,\lambda_1\right)=\sum_{j=1}^{p}\mathcal{P}_{\gamma}\left(\vert\tilde{u}_{j}\vert,\lambda_1\right)$. Similarly, we also have $\sum_{j^{\prime}=1}^{J_{i}}\mathcal{P}_{\gamma}\left(\Vert{\tilde{\bf u}}_{\left(j^{\prime}\right)}\Vert_2,\sqrt{d_{ij^{\prime}}}\lambda_2\right) =\sum_{j^{\prime}=1}^{J_{i}}\mathcal{P}_{\gamma}\left(\Vert{\hat{\bf q}}_{i\left(j^{\prime}\right)}^{or}\Vert_2,\sqrt{d_{ij^{\prime}}}\lambda_2\right) = \frac{1}{2}\sum_{j^\prime}^{J_i} d_{ij^\prime}\gamma \lambda_2^2$.

Based on the definition of $\hat{{\bf q}}_{i}^{or}$, we have $\Vert\hat{{\bf S}}\hat{{\bf S}}^{\top}-\hat{{\bf s}}_{i}^{or}\left(\hat{{\bf s}}_{i}^{or}\right)^{\top}\Vert_{F}^{2}<\Vert\hat{{\bf S}}\hat{{\bf S}}^{\top}-\tilde{{\bf s}}\tilde{{\bf s}}^{\top}\Vert_{F}^{2}$
for $\tilde{{\bf u}}\neq\hat{{\bf q}}_{i}^{or}$, where $\hat{{\bf s}}_{i}^{or}=\left({\bf I}-\tilde{{\bf S}}_{i}\tilde{{\bf S}}_{i}^{\top}\right)\hat{{\bf q}}_{i}^{or}$
and $\tilde{{\bf s}}_{i}=\left({\bf I}-\tilde{{\bf S}}_{i}\tilde{{\bf S}}_{i}^{\top}\right)\tilde{{\bf u}}$.
This implies that $G\left(\hat{{\bf q}}_{i}^{or}\right)<G\left(\tilde{{\bf u}}\right)$.

Next, we will compare $G\left(\tilde{{\bf u}}\right)$ and $G\left({\bf u}\right)$. We have 
\begin{align}
\label{eq_gdiff2}
G\left(\tilde{{\bf u}}\right)-G\left({\bf u}\right) & =-\tilde{{\bf u}}^{\top}\hat{{\bf H}}\hat{{\bf S}}\hat{{\bf S}}^{\top}\hat{{\bf H}}\tilde{{\bf u}}+{\bf u}^{\top}\hat{{\bf H}}\hat{{\bf S}}\hat{{\bf S}}^{\top}\hat{{\bf H}}{\bf u} \nonumber \\
 & +\sum_{j=1}^{p}\mathcal{P}_{\gamma}\left(\vert\tilde{u}_{j}\vert,\lambda_{1}\right)-\sum_{j=1}^{p}\mathcal{P}_{\gamma}\left(\vert u_{j}\vert,\lambda_{1}\right)+ \nonumber\\
 & +\sum_{j^{\prime}=1}^{J_{i}}\mathcal{P}_{\gamma}\left(\Vert\tilde{{\bf u}}_{(j^{\prime})}\Vert_2,\sqrt{d_{ij^{\prime}}}\lambda_{2}\right)-\sum_{j^{\prime}=1}^{J_{i}}\mathcal{P}_{\gamma}\left(\Vert{\bf u}_{(j^{\prime})}\Vert_2,\sqrt{d_{ij^{\prime}}}\lambda_{2}\right)
\end{align}
where $\hat{{\bf H}}={\bf I}-\tilde{{\bf S}}_{i}\tilde{{\bf S}}_{i}^{\top}$. The bound of $-\tilde{{\bf u}}^{\top}\hat{{\bf H}}\hat{{\bf {\bf S}}}\hat{{\bf S}}^{\top}\hat{{\bf H}}\tilde{{\bf u}}+{\bf u}^{\top}\hat{{\bf H}}\hat{{\bf {\bf S}}}\hat{{\bf S}}^{\top}\hat{{\bf H}}{\bf u}$ is from the proof of Theorem 2 in \cite{liu::wang2025}, which has the following form,
\[
-\tilde{{\bf u}}^{\top}\hat{{\bf H}}\hat{{\bf {\bf S}}}\hat{{\bf S}}^{\top}\hat{{\bf H}}\tilde{{\bf u}}+{\bf u}^{\top}\hat{{\bf H}}\hat{{\bf {\bf S}}}\hat{{\bf S}}^{\top}\hat{{\bf H}}{\bf u}\leq O_{p}\left(\tau_{n,p,m}\right)\sum_{j\notin\mathcal{V}_{i}}\vert u_{j}\vert.
\]

For $\sum_{j=1}^{p}\mathcal{P}_{\gamma}\left(\vert\tilde{u}_{j}\vert,\lambda_{1}\right)-\sum_{j=1}^{p}\mathcal{P}_{\gamma}\left(\vert u_{j}\vert,\lambda_{1}\right)$ and $\sum_{j^{\prime}=1}^{J_{i}}\mathcal{P}_{\gamma}\left(\Vert\tilde{{\bf u}}_{(j^{\prime})}\Vert_2,\sqrt{d_{ij^{\prime}}}\lambda_{2}\right)-\sum_{j^{\prime}=1}^{J_{i}}\mathcal{P}_{\gamma}\left(\Vert{\bf u}_{(j^{\prime})}\Vert_2,\sqrt{d_{ij^{\prime}}}\lambda_{2}\right)$, the same arguemts can have as those in Part 1. Thus, we have have $G\left(\tilde{{\bf u}}\right)-G\left({\bf u}\right)<0$ for $\tilde{{\bf u}} \neq {\bf u}$.
Thus, we have shown that $G\left(\hat{{\bf q}}_{i}^{or}\right)<G\left(\tilde{{\bf u}}\right)<G\left({\bf u}\right)$,
which implies that $\hat{{\bf q}}_{i}^{or}$ is a local minimizer
of the objective function. This completes the proof.

\end{proof}

\section{Lemmas}
The proof of the theoretical properties of the proposed sparse group factor model is an extension of that in \cite{liu::wang2025} for the sparse factor model. We introduce some lemmas in \cite{liu::wang2025} first.  In Lemma \ref{lem_or}, we introduce the oracle property, which can be found in the proof of \cite{liu::wang2025}.

\begin{lemma}
Let ${\bf u}$ be a $p\times 1$ vector such that $\Vert{\bf u}\Vert_2=1$ 
and $\mathcal{A}$ be an index, ${\bf u}_{[\mathcal{A}]}^{*}={\bf u}_{[\mathcal{A}]}$, ${\bf u}_{[-\mathcal{A}]}^{*}={\bf 0}$ and $\vert \mathcal{A}\vert \asymp m$. Then, (the max element) \label{lemma_uE}
\[
\Vert{\bf u}^{*\top}\left(\hat{\bS}\hat{\bS}^{\top}-\bS\bS^{\top}\right)\Vert_{\max} =\begin{cases}
 O_p\left(\max\left(m^{2\delta-2}p^{2}n^{-1/2},m^{\delta}\right)\sqrt{\frac{\log p}{n}}\right), & \text{if } m= o(p)  \\
O_p\left( m^{\delta-1} p n^{-1/2}\right)  = O_p\left(p^\delta n^{-1/2}\right) & \text{if } m= O(p).  \\
\end{cases}
\]
\end{lemma}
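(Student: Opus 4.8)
Since $\hat{\bS}$ and $\bS$ are orthonormal bases of the estimated and true loading spaces and do not involve the group structure at all, this bound is exactly the one underlying the sparse factor model of \cite{liu::wang2025}, and I would follow their perturbation-theoretic route essentially verbatim. The plan is to control the difference of the two projectors $\bE=\hat{\bS}\hat{\bS}^\top-\bS\bS^\top$ entrywise after pre-multiplying by the $m$-sparse vector $\bu^{*}$. First I would introduce the orthogonal matrix $\bW$ aligning $\hat{\bS}$ with $\bS$ (the Procrustes rotation) and set $\bD=\hat{\bS}-\bS\bW$, which yields the exact identity $\bE=\bS\bW\bD^\top+\bD\bW^\top\bS^\top+\bD\bD^\top$. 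Pre-multiplying by $\bu^{*\top}$ and taking the max norm columnwise, each term factors into an $\ell_2$ norm times a two-to-infinity norm, e.g. $\Vert\bu^{*\top}\bS\bW\bD^\top\Vert_{\max}\le\Vert\bu^{*\top}\bS\Vert_2\,\Vert\bD\Vert_{2\rightarrow\infty}$, so that $\Vert\bu^{*\top}\bE\Vert_{\max}\lesssim\Vert\bD\Vert_{2\rightarrow\infty}+\Vert\bD\Vert_2\,\Vert\bS\Vert_{2\rightarrow\infty}\bigl(1+\Vert\bD\Vert_{2\rightarrow\infty}\bigr)$, using $\Vert\bu^{*}\Vert_2\le1$. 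Here Condition (C\ref{cond_coherence}) supplies $\Vert\bS\Vert_{2\rightarrow\infty}\lesssim\sqrt{r/m}$.

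The core of the argument is then to produce sharp rates for $\Vert\bD\Vert_2$ and $\Vert\bD\Vert_{2\rightarrow\infty}$. For the spectral norm I would use a Davis--Kahan bound, $\Vert\bD\Vert_2\lesssim\Vert\hat{\bM}-\bM\Vert_2/\mathrm{gap}$, where Conditions (C\ref{cond_strength}) and (C\ref{cond_eigenM}) give that the nonzero eigenvalues of $\bM=\bA^s(\cdots)\bA^{s\top}$, and the eigengap, are of order $\Vert\bA^s\Vert_2^{4}\asymp m^{2-2\delta}$. To reach the stated max-norm rate, however, the crude substitution $\Vert\bu^{*\top}\bD\Vert_2\le\Vert\bD\Vert_2$ is too lossy, so I would instead expand the projector difference to first order (via the resolvent/Neumann expansion), writing the leading term as $\bu^{*\top}\bP^\perp(\hat{\bM}-\bM)\bS\bLambda^{-1}\bS^\top$ plus its transpose, with $\bLambda\asymp m^{2-2\delta}\bI$ and $\bP^\perp=\bI-\bS\bS^\top$. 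The max norm of this leading term is bounded by $\Vert\bLambda^{-1}\Vert_2\,\Vert\bS\Vert_{2\rightarrow\infty}$ times an entrywise bilinear quantity such as $\Vert\bu^{*\top}\bP^\perp(\hat{\bM}-\bM)\bS\Vert_\infty$; the factor $\Vert\bLambda^{-1}\Vert_2\asymp m^{2\delta-2}$ is precisely what supplies the $m^{2\delta-2}$ appearing in the bias term of the statement.

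The rates for the bilinear terms are where the two sub-cases and the $\sqrt{\log p/n}$ factor emerge. Decomposing $\hat{\bM}-\bM$ through $\hat{\bSigma}_x(h)-\bSigma_x(h)$ into a factor-covariance error, factor--noise cross terms, and a pure-noise cross term, I would bound each coordinate of the relevant bilinear forms by a sub-Gaussian maximal inequality under Condition (C\ref{cond_subgaussian}); maximizing over the $p$ coordinates produces the $\sqrt{\log p}$, each coordinate concentrating at the $n^{-1/2}$ scale, while the strong cross-sectional dependence of the noise (the Lam--Yao regime adopted here) injects the powers of $p$. Collecting terms yields a variance-type contribution of order $m^{\delta}\sqrt{\log p/n}$ and a bias-type contribution of order $m^{2\delta-2}p^{2}n^{-1/2}\sqrt{\log p/n}$, whose maximum is $\tau_{n,p,m}$ when $m=o(p)$; when $m=O(p)$ the sparse refinement is unnecessary and the plain operator-norm bound $\Vert\bu^{*\top}\bE\Vert_{\max}\le\Vert\bE\Vert_2\lesssim\Vert\bD\Vert_2$ already delivers $m^{\delta-1}pn^{-1/2}=p^{\delta}n^{-1/2}$.

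The main obstacle is the sharp bookkeeping in this last step: obtaining the correct joint scaling in $m$, $p$, $n$ and $\delta$ for the entrywise bilinear forms in $\hat{\bM}-\bM$, and in particular showing that the quadratic (higher-order) remainder of the projector expansion, controlled by powers of $\Vert\hat{\bM}-\bM\Vert_2/\mathrm{gap}$, is dominated by the first-order term in the consistency regime $\tau_{n,p,m}=o(1)$. Correctly tracking how the strong cross-sectional noise dependence contributes the $p$-factors while the maximum over coordinates contributes only $\sqrt{\log p}$, and verifying that the $m=O(p)$ boundary matches the \cite{lam2012} rate, is the delicate part; the two-to-infinity perturbation theory of \cite{cape2019two} together with the coherence Condition (C\ref{cond_coherence}) is exactly the tool that makes this accounting possible.
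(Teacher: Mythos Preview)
Your proposal is correct and aligned with the paper's treatment: the paper does not prove this lemma at all but explicitly imports it from \cite{liu::wang2025} (the section opens with ``We introduce some lemmas in \cite{liu::wang2025} first''), and you correctly recognize at the outset that the bound involves only $\hat{\bS}$, $\bS$ and the sparsity level $m$, hence is inherited verbatim from that reference. Your perturbation-theoretic sketch (Procrustes decomposition of $\bE$, Davis--Kahan for $\Vert\bD\Vert_2$, first-order projector expansion with eigengap $\asymp m^{2-2\delta}$, sub-Gaussian maximal inequalities for the entrywise bilinear forms, and the coherence condition (C\ref{cond_coherence}) via the two-to-infinity machinery of \cite{cape2019two}) is a faithful outline of the route one would expect in \cite{liu::wang2025}, so there is nothing to add beyond noting that the present paper simply cites the result rather than reproving it.
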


First, we introduce the oracle estimator. The oracle estimator is defined as when the sparsity of ${\bf q}_i$ is known.

For ${\bf q}_1$, when the sparsity, $\mathcal{V}_1$, is known, the oracle estimator is defined as 
\begin{align*}
\hat{{\bf q}}_{1}^{or} & =\arg\min_{{\bf q}_{1}}\Vert\hat{{\bf S}}\hat{{\bf S}}^{\top}-{\bf q}_{1}{\bf q}_{1}^{\top}\Vert_{F}^{2}\\
 & \text{subject to }{\bf q}_{1[\mathcal{N}_{1}]}={\bf 0} \text{ and } \Vert {\bq_1} \Vert_2 =1 .
\end{align*}

Let ${\bf H}_i={\bf I}-{\bf S}_{i}{\bf S}_{i}^{\top}$,
\textbf{$\hat{{\bf H}}_i={\bf I}-\tilde{{\bf S}}_{i}\tilde{{\bf S}}_{i}^{\top}$}. When the sparsity of $\mathcal{V}_i$ is known, the oracle estimator is defined as, 
\begin{align}
\label{eq_ob_qi_or}
\hat{{\bf q}}_{i}^{or}=\arg\min_{{\bf q}_{i}}-{\bf q}_{i}^{\top}\hat{{\bf H}}_i\hat{{\bf S}}\hat{{\bf S}}_i^{\top}{\bf \hat{H}}_i{\bf q}_{i}\\
\text{subject to }{\bf q}_{i}^{\top}\hat{{\bf H}}_i{\bf \hat{H}}_i{\bf q}_{i}=1 \text{ and } \bq_{i[-\mathcal{V}_i]} ={\bf 0}. \nonumber
\end{align}

And denote $\hat{\bf q}_i$ as the estimator of ${\bf q}_i$. We have the following results about the oracle estimator following that in \cite{liu::wang2025}. Since the definition of the oracle estimator here is the same as that in \cite{liu::wang2025}.

\begin{lemma}
\label{lem_or}
   Under Conditions (C\ref{cond_alphamix})-(C\ref{cond_coherence}), we have 
   \begin{equation}
   \label{eq_or1}
        \Vert\hat{{\bf q}}_{1}^{or}-{\bf q}_{1}\Vert_{2}=O_{p}\left(\tau_{n,p,m}\right).
   \end{equation}
   
For $i\geq 2$, given that $\Vert \hat{{\bf q}}_{i-1} - {\bf q}_{i-1}\Vert_2= O_p(\tau_{n,p,m})$, then 
   \begin{equation}
   \label{eq_or2}
   \Vert\hat{{\bf q}}_{i}^{or}-{\bf q}_{i}\Vert_{2}=O_{p}\left(\tau_{n,p,m}\right).
   \end{equation}
\end{lemma}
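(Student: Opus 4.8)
\emph{Reduction.} The essential observation is that the oracle estimators $\hat{{\bf q}}_i^{or}$ are defined by \emph{unpenalized} constrained problems whose only role is to fix the zero pattern to the true support $\mathcal{V}_i$ (equivalently $\mathcal{N}_i^*$). Neither the individual MCP term nor the group MCP term appears, so the group structure $\{{\cal G}_i, J_i, d_{ij'}\}$ is entirely absent from the oracle objective. Consequently the oracle estimator here coincides exactly with the one analyzed in \cite{liu::wang2025}, and both bounds \eqref{eq_or1} and \eqref{eq_or2} follow under the same Conditions (C\ref{cond_alphamix})--(C\ref{cond_coherence}). For completeness I would reproduce the genuine argument, which is a support-restricted eigenvector perturbation bound.

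\emph{Part 1 ($i=1$).} Since $\Vert{\bf q}_1\Vert_2=1$, minimizing $\Vert\hat{{\bf S}}\hat{{\bf S}}^\top-{\bf q}_1{\bf q}_1^\top\Vert_F^2$ is equivalent to maximizing ${\bf q}_1^\top\hat{{\bf S}}\hat{{\bf S}}^\top{\bf q}_1$ over unit vectors supported on $\mathcal{V}_1$, so $\hat{{\bf q}}_1^{or}$ is the leading eigenvector of the principal submatrix of $\hat{{\bf S}}\hat{{\bf S}}^\top$ indexed by $\mathcal{V}_1$, padded with zeros off $\mathcal{V}_1$. Its population counterpart is the leading eigenvector of the corresponding submatrix of the projection ${\bf S}{\bf S}^\top$. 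Because ${\bf S}{\bf S}^\top{\bf q}_1={\bf q}_1$ (as ${\bf q}_1={\bf s}_1\in\mathcal{M}({\bf S})$) and every eigenvalue of a principal submatrix of a projection lies in $[0,1]$, the true ${\bf q}_{1[\mathcal{V}_1]}$ is a top eigenvector with eigenvalue $1$. The plan is then to establish a spectral gap $\kappa\asymp 1$ separating this eigenvalue from the rest, and to combine it with the perturbation estimate of Lemma \ref{lemma_uE}. Writing $\hat{{\bf q}}_1^{or}={\bf q}_1+{\bf \Delta}$ with ${\bf \Delta}$ supported on $\mathcal{V}_1$, the unit-norm constraints force the ${\bf q}_1$-component of ${\bf \Delta}$ to be $O(\Vert{\bf \Delta}\Vert_2^2)$, so ${\bf \Delta}$ is essentially orthogonal to ${\bf q}_1$; expanding ${\bf q}_1^\top{\bf S}{\bf S}^\top{\bf q}_1-\hat{{\bf q}}_1^{or\top}{\bf S}{\bf S}^\top\hat{{\bf q}}_1^{or}$ and invoking the gap gives a population deficit of order $\kappa\Vert{\bf \Delta}\Vert_2^2$. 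On the other hand, the optimality inequality $\hat{{\bf q}}_1^{or\top}\hat{{\bf S}}\hat{{\bf S}}^\top\hat{{\bf q}}_1^{or}\geq{\bf q}_1^\top\hat{{\bf S}}\hat{{\bf S}}^\top{\bf q}_1$, with ${\bf E}=\hat{{\bf S}}\hat{{\bf S}}^\top-{\bf S}{\bf S}^\top$, shows that the same deficit is at most $\hat{{\bf q}}_1^{or\top}{\bf E}\hat{{\bf q}}_1^{or}-{\bf q}_1^\top{\bf E}{\bf q}_1$. Since $\hat{{\bf q}}_1^{or}$ and ${\bf q}_1$ are unit vectors supported on $\mathcal{V}_1$ with $|\mathcal{V}_1|\asymp m$, Lemma \ref{lemma_uE} controls the action of ${\bf E}$ on them; bounding this stochastic cross term by $O_p(\tau_{n,p,m})\Vert{\bf \Delta}\Vert_2$ and balancing the two sides yields $\Vert{\bf \Delta}\Vert_2=O_p(\tau_{n,p,m})$, which is \eqref{eq_or1}.

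\emph{Part 2 ($i\geq 2$).} For $i\geq 2$ the only change is that $\hat{{\bf S}}\hat{{\bf S}}^\top$ is replaced by the deflated matrix $\hat{{\bf H}}_i\hat{{\bf S}}\hat{{\bf S}}^\top\hat{{\bf H}}_i$, with $\hat{{\bf H}}_i={\bf I}-\tilde{{\bf S}}_i\tilde{{\bf S}}_i^\top$, and the normalization becomes $\Vert\hat{{\bf H}}_i{\bf q}_i\Vert_2=1$. In the population version $\hat{{\bf H}}_i$ becomes ${\bf H}_i={\bf I}-{\bf S}_i{\bf S}_i^\top$, and ${\bf s}_i={\bf H}_i{\bf q}_i\in\mathcal{M}({\bf S})$ is again a top eigenvector with eigenvalue $1$, so the gap-plus-perturbation scheme of Part 1 transfers directly. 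The new ingredient is that $\hat{{\bf H}}_i$ is itself estimated: the induction hypothesis $\Vert\hat{{\bf q}}_{i-1}-{\bf q}_{i-1}\Vert_2=O_p(\tau_{n,p,m})$ propagates to $\Vert\hat{{\bf H}}_i-{\bf H}_i\Vert_2=O_p(\tau_{n,p,m})$, and this extra perturbation must be carried through the deflation while keeping the final rate at $\tau_{n,p,m}$.

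\emph{Main obstacle.} I expect the difficulty to be twofold. First, proving that the spectral gap $\kappa$ stays bounded away from $0$ as $p\to\infty$: this is precisely where the most-sparse property (non-emptiness of $\mathcal{N}_i^*$), Condition (C\ref{cond_si1_eigen}) on $\Vert{\bf S}_{i,1}\Vert_{\min}$, and the strength assumption (C\ref{cond_strength}) enter, to guarantee that no competing unit vector supported on $\mathcal{V}_i$ lies nearly inside $\mathcal{M}({\bf S})$. Second, converting the $\max$-norm control of Lemma \ref{lemma_uE} into an $\ell_2$-type bound on the cross term $\hat{{\bf q}}_i^{or\top}{\bf E}{\bf \Delta}$ without losing an extra $\sqrt{m}$ or $p$ factor; this is the delicate place where the coherence assumption (C\ref{cond_coherence}) and the two-to-infinity norm bound on ${\bf S}$ are used, and where the accumulated deflation error from $\hat{{\bf H}}_i-{\bf H}_i$ must be shown to remain of order $\tau_{n,p,m}$.
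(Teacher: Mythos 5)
Your reduction is exactly the paper's proof: the paper establishes \eqref{eq_or1} and \eqref{eq_or2} precisely by observing that the oracle problems are unpenalized and support-restricted, hence the oracle estimators coincide with those in \cite{liu::wang2025}, so both rates transfer directly (with the result for $i\geq 2$ applied sequentially given the rate for $\hat{\bf q}_{i-1}$). Your additional spectral-gap-plus-perturbation sketch goes beyond what the paper writes down --- the paper simply cites \cite{liu::wang2025} and does not reproduce the argument --- but it is consistent with the cited machinery (in particular Lemma \ref{lemma_uE} and Conditions (C\ref{cond_si1_eigen})--(C\ref{cond_coherence})) and does not change the verdict that your approach matches the paper's.
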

\begin{proof}

$\Vert\hat{{\bf q}}_{1}^{or}-{\bf q}_{1}\Vert_{2}=O_{p}\left(\tau_{n,p,m}\right)$ comes directly from the result in \cite{liu::wang2025}. And from the part 1 proof of Theorem 1, we know that $\Vert \hat{{\bf q}}_1 - {\bf q}_1\Vert_2= O_p(\tau_{n,p,m})$ holds, then \eqref{eq_or2} follows that in \cite{liu::wang2025} to prove the results sequentially.
\end{proof}

\bibliographystyle{apalike}
\bibliography{reference}

\end{document}